\definecolor{Plum}{rgb}{.5,0,1}
\newcommand{\ran}{\operatorname{ran}}
\newcommand{\gap}{\operatorname{gap}}
\newcommand{\spec}{\operatorname{spec}}
\newcommand{\supp}{\operatorname{supp}}
\newcommand{\spa}{\operatorname{span}}
 \def\1{{\mathchoice {\mathrm{1\mskip-4mu l}} {\mathrm{1\mskip-4mu l}} %
		{\mathrm{1\mskip-4.5mu l}} {\mathrm{1\mskip-5mu l}}}}
\newcommand{\bC}{{\mathbb C}}
\newcommand{\bN}{{\mathbb N}}
\newcommand{\bZ}{{\mathbb Z}}
\newcommand{\cA}{{\mathcal A}}
\newcommand{\cH}{{\mathcal H}}
\newcommand{\braket}[2]{\langle {#1} \ | \ {#2}\rangle}
\newcommand{\ket}[1]{|{#1}\rangle}
\newcommand{\ketbra}[1]{\vert #1\rangle\langle #1\vert}
\newcommand{\vp}{\varphi}
\newcommand{\caG}{\mathcal{G}}
\newcommand{\caC}{\mathcal{C}}
\newcommand{\cG}{\mathcal{G}}
\newcommand{\cB}{\mathcal{B}}
\newcommand{\cD}{\mathcal{D}}
\newcommand{\cS}{\mathcal{S}}
\newcommand{\cR}{\mathcal{R}}
\newcommand{\cT}{\mathcal{T}}
\newcommand{\cV}{\mathcal{V}}
\newcommand{\per}{\mathrm{per}}
\newcommand{\obc}{\mathrm{obc}}
\renewcommand{\subset}{\subseteq}
\newcommand{\be}{\begin{equation}}
\newcommand{\ee}{\end{equation}}
\newtheorem{theorem}{Theorem}
\newtheorem{lem}[theorem]{Lemma}
\newtheorem{cor}[theorem]{Corollary}
\numberwithin{equation}{section}
\numberwithin{theorem}{section}
\title{\LARGE The spectral gap of a fractional quantum Hall system\\ on a thin torus}
\author{Simone Warzel and Amanda Young}
\date{\small\today}							
\begin{document}
\maketitle

\minisec{Abstract} 
We study a fractional quantum Hall system with maximal filling $ \nu = 1/3 $ in the thin torus limit. The corresponding Hamiltonian is a truncated version of Haldane's pseudopotential, which upon a Jordan-Wigner transformation is equivalent to a one-dimensional quantum spin chain with periodic boundary conditions. Our main result is a lower bound on the spectral gap of this Hamiltonian, which is uniform in the system size and total particle number. The gap is also uniform with respect to small values of the coupling constant in the model. The proof adapts the strategy of individually estimating the gap in invariant subspaces used for the bosonic $ \nu = 1/2 $ model to the present fermionic case.

\section{Introduction}\label{sec:intro}
\subsection{Hamiltonian description of FQHE on the torus}
Haldane pseudopotentials provide a Hamiltonian description of the main features for factional quantum Hall (FQH) systems~\cite{PhysRevLett.51.605,Duncan1990}. 
These are parent Hamiltonians for Laughlin's famous many-particle wave functions corresponding to a fixed filling ratio $ \nu $ in a two-dimensional geometry with a perpendicular constant magnetic field \cite{Trugman:1985lv,MooreRead91,Ortiz:13,Lee:13}, and have been shown to emerge in a scaling limit of short-range interactions \cite{seiringer:2020}. 

An example of a non-trivial  two-dimensional geometry is a rectangular torus of lengths $ a ,b > 0 $. In the presence of the magnetic field all lengths are naturally measured in terms of the magnetic length $ \ell > 0 $. For  integer flux $ L := \frac{a b }{2\pi \ell^2 } \in \mathbb{N} $, the lowest Landau level (LLL) on the torus is an $L$-dimensional Hilbert space, which is spanned by the orthonormal functions
\be\label{torus_states}
\psi_{[m]}(x,y) := \sum_{j\in\bZ} \psi_{m+j L}(x,y), \qquad 0\leq x < a, \; 0\leq y < b,
\ee
for $1\leq m \leq L$ where  $\alpha := \frac{2\pi\ell}{b}$ and 
\[
\psi_{m}(x,y) := \left(\frac{\alpha}{2\pi^{3/2}}\right)^{1/2} e^{i\frac{\alpha}{\ell} my}e^{-\frac{1}{2}(x/\ell-\alpha m)^2} , \quad m\in\bZ , 
\]
constitutes an orthogonal basis of Landau orbital's on the infinite cylinder. The one-particle basis~\eqref{torus_states} of the LLL on the torus can thus be identified with the $ L $  sites of the integer lattice $ \Lambda_L =[1,L]$ in the ring geometry.  In contrast to the planar, spherical or cylinder geometry, various suggestions (deviating only in small details) exist in the physics literature for the precise form of Haldane's pseudopotentials on the torus \cite{Ortiz:13,Lee:13,Lee:15,Haldane:2018pi}. In this geometry, the most basic form of Laughlin's $ N $-particle wave function with filling fraction $ \nu= (p+2)^{-1} $  with $ p \in \mathbb{N}_0 $ is proportional to \cite{PhysRevLett.50.1395,HR85,Frem16} 
\[
 \exp\left( - \sum_{j} \frac{x_j^2}{2\ell^2}\right)  \ \prod_{j< k } \vartheta_1\left( \frac{z_j - z_k}{a} , i \frac{a}{b} \right)^{p+2} 
\]
when written in complex coordinates $ z_j = x_j + i y_j $. The Jastrow factor involves
Jacobi's theta function $ \vartheta_1(z,\tau) $, which has its zeros at $ z \in \bZ + \tau \bZ $. 
 Following the pragmatic approach of taking the second-quantization of Trugman-Kivelson's singular pair-interactions  \cite{Trugman:1985lv} projected onto the LLL, which has wave functions with the above Jastrow-factor in its kernel, one arrives at the Hamiltonian
\be\label{HaldaneH}
 W_{\Lambda_L}^\per  = \sum_{s  }  B_s^* B_s ,  \quad \mbox{with} \quad B_s := {\sum_k  }' F_p(k)  \, c_{s+k} c_{s-k}  ,
\ee
which is expressed in terms of creation and annihilation operators $ c_m^*, c_m  $ of the single-particle orbitals~\eqref{torus_states}. 
Here the first summation is over all integers and half-integers with  $ \frac{1}{2} \leq s \leq L  $. The second summation is over $ k  $ with $ \frac{1}{2} \leq k \leq N $ such that $  s + k $ (and hence $  s-k $) is integer. Moreover, as will always be the case for the periodic system, additions are understood modulo $ L $. 
The function $ F_p $ depends on the filling fraction $ \nu= (p+2)^{-1} $   under consideration. For the case of interest in this paper,  $ \nu = 1/3 $, one has 
\[ F_{1}(k) = \sum_{j\in\bZ}\alpha(k+jN)e^{-\alpha^2(k+jN)^2} . \] 
The $ \nu = 1/2 $-pseudopotential, which stems from a delta-pair interaction on the LLL,  is represented by $ F_{0}(k) = \sum_{j\in\bZ} e^{-\alpha^2(k+jN)^2} $. Above, the (anti-)commutation relations of the creation and annihilation operators must be chosen appropriately to reflect that odd $ p $ correspond to fermionic models and even $ p $ to bosonic models. 

In the thin torus limit $ \alpha \to \infty $, it is suggestive to truncate the $ k $-sum in~\eqref{HaldaneH} to the first few leading terms.  For $ \nu = 1/3$, a  truncation at $ | 2k| \leq 1 $ would yield the so-call Tao-Thouless Hamiltonian~\cite{tao:1983}, which consists of the commuting local terms 
$  n_m n_{m+2}  $ and $ n_{m+1} n_{m+2} $. Such a Hamiltonian trivially has a spectral gap, but does not describe a non-zero Hall conductivity~\cite{kapustin:2020} (see also~\cite{Bachmann:2021dp} and references therein). 
The first non-trivial truncation is at $ |2k | \leq 3 $, which, upon neglecting another overall multiplicative constant, corresponds to approximating  $  W_{\Lambda_L}^\per $ by 
\[
\sum_{m \in \Lambda_L} \left(  n_m n_{m+2} +  |F_1(1)|^{-2} b_m^* b_m \right) , \quad n_m := c_m^* c_m , \quad  b_m := F_1(\tfrac{1}{2}) \, c_{m+1} c_{m+2} + F_1(\tfrac{3}{2}) \,  c_{m} c_{m+3} .
\]
As the original operator $  W_{\Lambda_L}^\per  $, its truncated version conserves the total particle number as well as center of mass. 
Truncated Haldane pseudopotentials of the above form have been studied in~\cite{PhysRevB.85.155116,Jansen:2012da,Nakamura:2012bu,wang:2015,NWY:2021} and are believed to capture, at least qualitatively, the main features of FQH systems such as their incompressibility (see also~\cite{Frem15,Frem16}). In the Hamiltonian description, the latter is explained by combining the maximal filling factor $ \nu $ of their degenerate ground-state space with a spectral gap above these ground states, which is uniform in the particle number as well as the volume $ \Lambda_L $. For the case $ \nu= 1/3 $ these properties were mathematically established in~\cite{NWY:2021}. However, the estimates of the bulk gap were plagued by edge modes. It is the purpose of the present paper to revisit and fix this problem.

\subsection{Bulk spectral gap of the corresponding spin chain} 
As in~\cite{NWY:2021} we find it convenient to rewrite the fermionic system as a spin-$ \tfrac{1}{2} $ chain 
using the Jordan-Wigner transformation on  $ \Lambda_L  = [1,L] \cap \bZ $. Given the three Pauli matrices $ \sigma_x^1 , \sigma_x^2 , \sigma_x^3 $ and the corresponding lowering and raising operators 
$ \sigma_x^\pm := \tfrac{1}{2} ( \sigma_x^1 + i \sigma_x^2 ) $, 
which act on $ \mathbb{C}^2 \equiv \spa\{ | 1 \rangle , | 0 \rangle \} $  for each $ x $, the operators 
\be\label{eq:JW}
c_x =  P_{x-1} 	\sigma_x^-  , \qquad 
c_x^*  =	 P_{x-1}  \sigma_x^+ , \qquad P_{x-1} := \prod_{k=1}^{x-1} \sigma_k^3 .
\ee
implement the canonical anticommutation relations \cite{Jordan:1928sd} on the tensor product 
\begin{equation*}
\mathcal{H}_{\Lambda_L} := \bigotimes_{x=1}^L \mathbb{C}^2.
\end{equation*} 
As usual, we suppress the identity $ \1 $ from our notation when considering the Pauli matrices on $\cH_{\Lambda_L}$. Our conventions are such that $\sigma^3\ket{k} = (-1)^{k+1}\ket{k}$ for $k=0,1$, and $P_0:=\1$, from which one finds that the parity operator satisfies $ P_L = \exp\left( i \pi \sum_{x =1}^L ( n_x -1) \right) = (-1)^{N+L}  $, where $ N =\sum_{x=1}^Ln_x$ is the total number of particles operator. 

Under the Jordan-Wigner transformation, the truncated Haldane $ \nu = 1/3 $-pseudopotential with periodic boundary conditions is unitarily equivalent to the following spin-$\tfrac{1}{2} $ chain:
\be\label{def:Hspin}
H_{\Lambda_L}^\per  :=  \sum_{x=1}^L \left( n_{x}n_{x+2} + \kappa q_x^*q_x \right) 
\ee
where $ \kappa  > 0 $ and $ \lambda \in \bC $ will subsequently be taken as arbitrary parameters. The `physical' values correspond to $ \kappa = | F_1(\tfrac{1}{2})|^2/ |F_1(1)|^{2} $ and $ \lambda =- F_1(\tfrac{3}{2}) / F_1(1) $. For all $ x \in \Lambda_L\backslash \{ L-2,L\}$, under the  transformation~\eqref{eq:JW} the nontrivial four-spin coupling $ q_x^*q_x $ is consistent with the definition
\be\label{def:q}
 q_x = \sigma_{x+1}^- \sigma_{x+2}^- - \lambda \  \sigma_{x}^- \sigma_{x+3}^- \, .
\ee
For $ x \in \{ L-2,L\}  $, one obtains $ q_x = \sigma_{x+1}^- \sigma_{x+2}^- + \lambda P_L \  \sigma_{x}^- \sigma_{x+3}^- $, which is equivalent to~\eqref{def:q} on all subspaces of fixed total particle number $ N $ with $ L + N $ odd. By redefining the Jordan-Wigner transformation replacing $ 	\sigma_x^- \mapsto e^{i\pi x/L} 	\sigma_x^- $, one finds that on all subspaces of fixed total particle number $ N $ with $ L + N $ even, the transformed Hamiltonian is isospectral to~\eqref{def:Hspin} with 
$ q_x $ defined as in~\eqref{def:q} for all  $ x \in \Lambda_L $. Since we are merely interested in spectral information, we will subsequently analyze the spectrum of~\eqref{def:Hspin} with $ q_x $ as in~\eqref{def:q} for all $ x \in \Lambda_L $.\\

There are two main strategies for proving spectral gaps in quantum spin Hamiltonians: (1) inductive methods based on norm-estimates of nested ground-state projections, such as the martingale method \cite{affleck:1988,nachtergaele:1996}, and (2) finite-size criteria, such as the approach by Knabe \cite{knabe:1988}. Both methods can be applied to open and periodic boundary conditions \cite{young:2016,lemm:2018,Lemm:2020}. To obtain results for the periodic system,  in practice one often uses the gap bound from the martingale method for the system with open boundary conditions for the finite-size criteria. However, if the model with open boundary conditions has low-lying edge excitations that are not present in the periodic system, this strategy is at best sub-optimal as it produces gap estimates that do not accurately reflect the bulk behavior. 
This was the case of the spectral-gap results in \cite{NWY:2021}, where all estimates on spectral gaps depend on lower bounds of the spectral gap for the Hamiltonian with the open boundary conditions, i.e.,
\begin{equation}\label{def:Hopen}
	H_{\Lambda_L} =\sum_{x=1}^{L-2}n_{x}n_{x+2} + \kappa\sum_{x=1}^{L-3}q_x^*q_x, \end{equation}
with $ q_x $ as in~\eqref{def:q}. 
As was explained in more detail in~\cite{NWY:2021}, this Hamiltonian has a multitude of eigenstates associated with excitations at the left or right boundary of the interval $ \Lambda_L =[1,L] \subset \bZ $, whose eigenvalues tend to zero as $ \lambda \to 0 $. For example, if $ L \geq 6 $ the subspace $ \spa\{ | 11001 0\dots\rangle , | 10110 0\dots\rangle  \} $, which is spanned by tensor products of the canonical basis vectors of $ \bC^2 $,  is invariant under $ H_{\Lambda_L} $. For small $ | \lambda | $, the smaller of the two eigenvalues on this subspace is of the order $ \frac{\kappa}{1+\kappa} |\lambda|^2 + \mathcal{O}(|\lambda|^4)  $. However, as suggested by the numerics in \cite{NWY:2021}, such modes do not appear in the system with periodic boundary conditions and, as such, do not accurately reflect the bulk gap.

 In \cite{WY:2021} we introduced a strategy to circumvent such edge modes for the Hamiltonian with open boundary conditions and produce a robust estimate on the bulk gap in the case of a truncated version of Haldane's $\nu=1/2$ pseudopotential. The aim of this work is to adapt this strategy to the $\nu=1/3$ model. Our main result is the following theorem.

\begin{theorem}\label{thm:bulk_gap}
 There is a monotone increasing function $ f : [0,\infty) \to [0,\infty) $ such that for all $ 0\neq \lambda \in \mathbb{C} $  with the property $ f(|\lambda|^2)   < 1/3 $ and all $ \kappa \geq 0 $:
\be\label{eq:main2}
\liminf_{L\to \infty} \gap(H_{\Lambda_L}^\per)   \geq  \min\left\{ \gamma^\per , \, \frac{\kappa}{6(1+2|\lambda|^2)} \left( 1 - \sqrt{3 f(|\lambda|^2) } \right)^2 \right\} 
\ee
where
\be\label{def:gammaper}
	 \gamma^\per:=\frac{1}{3}\min\left\{1,\frac{\kappa}{2+2\kappa|\lambda|^2},\frac{\kappa}{1+\kappa}\right\}.
\ee
\end{theorem}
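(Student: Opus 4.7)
The plan is to adapt the invariant subspace strategy of \cite{WY:2021} to the $\nu=1/3$ fermionic setting. Because the Hamiltonian $H_{\Lambda_L}^\per$ preserves the total particle number, the center of mass, and certain pattern structures induced by the hopping operator $q_x$, the Hilbert space $\cH_{\Lambda_L}$ admits an orthogonal decomposition into subspaces $\cH^\eta$ indexed by a combinatorial label $\eta$, each invariant under $H_{\Lambda_L}^\per$. The spectral gap of $H_{\Lambda_L}^\per$ is then the minimum of the gaps of the restrictions $H^\eta := \left. H_{\Lambda_L}^\per \right|_{\cH^\eta}$, and the main task is to bound each such gap from below by the right-hand side of \eqref{eq:main2}.

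First I would identify and treat the subspaces $\cH^\eta$ on which $H_{\Lambda_L}^\per$ has low-lying spectrum. By analogy with the $\nu = 1/2$ analysis, these are associated with ``root configurations'' in which no two particles sit at distance $\leq 2$ and on which the restriction of $H_{\Lambda_L}^\per$ effectively reduces to a small explicit problem. The bound $\gamma^\per$ in \eqref{def:gammaper} should emerge directly from diagonalizing these small problems, with the three terms in the minimum reflecting the distinct local ``defect'' configurations produced by a single application of some $q_x^* q_x$ to a root state, namely the competition between the electrostatic penalty $n_x n_{x+2}$ and the hopping penalty $\kappa q_x^* q_x$.

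Next, for the remaining invariant subspaces, I would apply a Knabe-type finite-size criterion. One groups consecutive terms of $H^\eta$ into overlapping blocks of fixed finite width, estimates the gap of each such block, and relates this local gap to a $\lambda$-dependent spectral quantity $f(|\lambda|^2)$---to be defined as a suitable ratio of eigenvalues of a local patch and shown to be monotone increasing in $|\lambda|^2$ with $f(0)=0$. The characteristic form $\bigl(1 - \sqrt{3 f(|\lambda|^2)}\bigr)^2$ in \eqref{eq:main2}, together with the prefactor $\kappa/\bigl(6(1+2|\lambda|^2)\bigr)$, is the signature of such a criterion, and the condition $f(|\lambda|^2) < 1/3$ is precisely what the criterion requires to yield a positive lower bound. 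A square-root-style operator inequality, as in \cite{WY:2021}, then converts this block-level estimate into a uniform lower bound on $\gap(H^\eta)$ as $L \to \infty$.

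The main obstacle will be the combinatorial bookkeeping required to identify the subspaces $\cH^\eta$ in the periodic setting: since the torus has no boundary, configurations wrap around and the cyclic structure interacts non-trivially with the center-of-mass conservation and with the Jordan--Wigner phase that appears in $q_{L-2}$ and $q_L$ before the unitary twist $\sigma_x^- \mapsto e^{i\pi x/L}\sigma_x^-$ is applied. A secondary, more technical obstacle is to control the behavior as $\lambda \to 0$: one must verify that the subspaces responsible for the spurious small eigenvalues of the open-boundary Hamiltonian~\eqref{def:Hopen}---such as the one spanned by $\{\,\ket{110010\dots}, \ket{101100\dots}\,\}$---simply do not arise as separate invariant subspaces in the periodic decomposition, so that the lower bound \eqref{eq:main2} does not degenerate in the $\lambda \to 0$ limit as it did in~\cite{NWY:2021}.
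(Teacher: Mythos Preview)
Your proposal has the two halves of the argument reversed, and this is a genuine gap rather than a cosmetic difference.

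In the paper the Hilbert space is split as $\cH_\Lambda = \caC_\Lambda^\per \oplus (\caC_\Lambda^\per)^\perp$, where $\caC_\Lambda^\per$ is the span of \emph{all} VMD tiling states. The ground-state space sits inside $\caC_\Lambda^\per$, and it is on this tiling subspace --- the one you call ``root configurations'' and identify as carrying the low-lying spectrum --- that the Knabe finite-size criterion is applied (Theorem~\ref{thm:FCS}), fed by a martingale bound on the open-boundary gap restricted to the tiling space (Theorem~\ref{thm:MM}). The factor $\bigl(1-\sqrt{3f(|\lambda|^2)}\bigr)^2$ is the signature of the \emph{martingale method}, not of Knabe: $f$ is defined as the supremum of $\|g_n E_{n-1}\|^2$ computed on $\caC_\Lambda$, and is neither an eigenvalue ratio nor produced by a finite-size threshold. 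The prefactor $\kappa/\bigl(6(1+2|\lambda|^2)\bigr)$ then packages the Knabe step together with the explicit norm and gap of $H_{[1,k]}$ on the small tiling spaces.

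Conversely, the constant $\gamma^\per$ does \emph{not} come from diagonalizing small problems on root subspaces. It is an electrostatic lower bound on the energy of any state in $(\caC_\Lambda^\per)^\perp$ (Theorem~\ref{thm:electrostatic}): one classifies non-tiling configurations by how they violate the characterization of Lemma~\ref{lem:configs} and, for each violation type, extracts energy from a suitable combination of $e_\Lambda(\mu)|\psi(\mu)|^2$ and $|\langle\nu|q_x\psi\rangle|^2$ via Cauchy--Schwarz. The three terms in the minimum correspond to the three violation types, not to ``defects produced by a single application of $q_x^*q_x$ to a root state.'' Your plan to obtain $\gamma^\per$ on the root sectors and apply Knabe on ``the remaining invariant subspaces'' therefore targets each bound at the wrong subspace, and would not recover either piece.
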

The function $f(r)$ (defined in Theorem~\ref{thm:MM} below) was analyzed in \cite[Appendix A]{NWY:2021} where it was shown that $f(|\lambda|^2)<1/3$ for $|\lambda|<5.3$. 
The bound~\eqref{eq:main2} is a significant improvement over \cite[Theorem 1.2]{NWY:2021} since it remains open in the  limit $ \lambda \to 0 $. It is consistent with the numerical results in~\cite{Nakamura:2012bu,wang:2015,NWY:2021} and the variational calculation in~\cite{NWY:2020}. The latter is based on a quasi-particle-hole ansatz (`magnetoroton' \cite{yang:2012}) and predicts the asymptotic parabolic behavior $ 1- \frac{2\kappa}{\kappa -1} |\lambda|^2 + \mathcal{O}(|\lambda|^4) $ as $ \lambda \to 0 $ for the spectral gap when  $ \kappa > 1 $ .  While the RHS of~\eqref{eq:main2} is not sharp, it is consistent with this asymptotics. \\

The paper is organized as follows. In the next section, a slew of invariant subspaces for both the Hamiltonian of interest, $H_{\Lambda_L}^\per $, and its counterpart with open boundary conditions, $ H_{\Lambda_L} $ are classified. Key to the invariant subspace gap strategy in~\cite{WY:2021} is identifying invariant subspaces that contain ground states. We focus on such subspaces and discuss a number of additional relevant properties for proving Theorem~\ref{thm:bulk_gap}. The invariant subspace strategy for the gap is then reviewed and applied to the present model in Section~\ref{sec:proof}. The arguments in this paper not only improve the estimate on the bulk gap in~\cite{NWY:2021}, but also streamline parts of the arguments presented there.

\section{Invariant subspaces and ground states}

One way to identify an invariant subspace of a Hamiltonian is to construct a subspace that is invariant under each of its interaction terms. For the Hamiltonians from~\eqref{def:Hspin} and \eqref{def:Hopen} this can be accomplished by starting from a vector in the occupation basis
\[
\left\{ \ket{\mu} : \mu = (\mu_a, \ldots, \mu_b)\in \{0,1\}^{|\Lambda|}\right\} \subseteq \cH_\Lambda
\]
and then taking the subspace spanned by all occupation states arising from successive action of the interaction terms. Since the occupation basis is invariant under the electrostatic interaction terms $n_{x}n_{x+2}$, any subspace which is invariant under all  $q_x^*q_x$ that contribute to $H_\Lambda^\#$, $\# \in\{\cdot, \, \per\}$, will necessarily be an invariant subspace for the respective Hamiltonian. 
A benefit of this approach is that the constructed subspace remains invariant after removing one or more terms from the Hamiltonian. In this case, the original subspace decomposes into a direct sum of invariant subspaces for the new Hamiltonian. This is particularly useful when one considers (1) both the Hamiltonian with periodic and open boundary conditions or (2) how $H_{\Lambda'}\equiv H_{\Lambda'}\otimes \1_{\Lambda\setminus \Lambda'}$ acts on an invariant subspace of $H_\Lambda$ where $\Lambda'\subseteq \Lambda$. Both of these situations will occur in this work.

We focus on identifying invariant subspaces that contain ground states. Due to frustration-freeness, the ground state space for either choice of boundary condition is given by
\[
\caG_\Lambda := \ker(H_\Lambda) = \bigcap_{x=1}^{L-2} \ker n_xn_{x+2}\cap \bigcap_{x=1}^{L-3}\ker q_x, \qquad \caG_\Lambda^\per := \ker(H_\Lambda^\per) = \bigcap_{x=1}^L\left(\ker n_xn_{x+2}\cap \ker q_x\right)
\]
where $\Lambda = [1,L]$. The constraint imposed by the electrostatic terms in the Hamiltonian implies that ground states can only be linear combinations of  $\ket{\mu}$ with $\mu_x\mu_{x+2}= 0$ for all sites $x$ relevant to the boundary conditions. States belonging to $\ker q_x$ can hold at most two particles on $[x,x+3]$. On this interval and at this maximal filling, $q_x$ has exactly three zero eigenstates that satisfy the electrostatic condition, namely
\begin{equation}\label{qx_kernel}
\ket{1100}, \quad \ket{0011}, \quad \text{and}\quad \ket{1001}+\lambda\ket{0110}.
\end{equation}
Moreover, any configuration with one or no particles on $[x,x+3]$ is necessarily in $\ker q_x$. The entire ground state space can be deduced from these simple observations. To organize these in a way that succinctly describes invariant subspaces and ground states, it is convenient to introduce domino tilings of $ \Lambda $. We begin by considering periodic boundary conditions.

\subsection{Periodic boundary conditions}\label{sec:Tiling_Spaces}

 A \emph{(periodic) VMD tiling} $T$ is any covering of the ring $\Lambda = [1,L]$ by void $V$, monomer $M$, and dimer $D$ domino tiles, which are defined as follows:
\begin{enumerate}\setlength{\itemsep}{0pt}

	\item The \emph{void} tile $V=(0)$ covers one lattice site and contains no particles.
	\item The \emph{monomer} tile $M = (100)$ covers three lattice site and has a single particle on the first site.
	\item The \emph{dimer} tile $D=(011000)$ covers six lattice sites and has a particle on the second and third sites. 
\end{enumerate}
The last eigenstate from \eqref{qx_kernel} motivates the reversible replacement rule
\begin{equation}\label{replacement_rule}
	(100)(100) \leftrightarrow (011000)
\end{equation}
which exchanges two neighboring monomers by a dimer (or vice versa). It is clear that by replacing all dimer tiles by a pairs of monomer, each VMD tiling $T$ is transformed to a tiling $R$ consisting of just voids and monomers. Such a tiling $R$ is called a \emph{periodic root tiling}, see Figure~\ref{fig:periodic_tilings}. 
\begin{figure}
	\begin{center}
	\includegraphics[scale=.3]{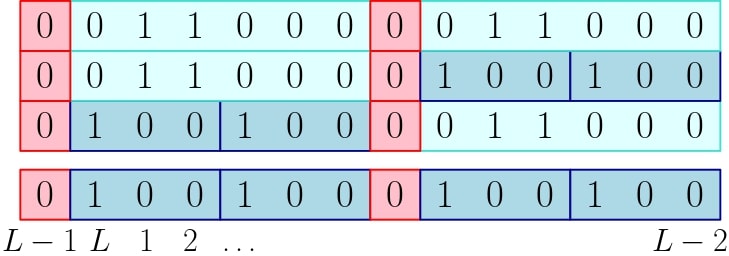}
\end{center}
	\caption{An example of a periodic root tiling $R$ and all connected tilings $T\leftrightarrow R$ for $L=14$. Since the replacement rule only applies to monomers, we use the void at site $x=L-1$ to unravel the ring to an interval for convenience.}
	\label{fig:periodic_tilings}
\end{figure}
More generally, the replacement rule constitutes an equivalence relation on the set of all periodic VMD tilings, denoted by $\cT_\Lambda^\per$. Two tilings $T,T'$ are \emph{connected}, which we denote by $T\leftrightarrow T'$, if one is transformed into the other after a finite number of replacements. The equivalence classes are thus classified by the set of root tilings, $\cR_\Lambda^\per$:
\be\label{tiling_decomp}
\cT_\Lambda^\per = \biguplus_{R\in \cR_\Lambda^\per}\cT_\Lambda^\per(R), \quad \cT_\Lambda^\per(R) = \{ T\in \cT_\Lambda^\per : T\leftrightarrow R\}.
\ee
As shown in the next lemma, the map $\sigma_\Lambda : \cT_\Lambda^\per \to \{0,1\}^L$ that identifies a tiling with its associated configuration is injective.
\begin{lem}\label{lem:configs}
Fix a ring $\Lambda $ of $ L$ sites and let $\sigma_\Lambda:\cT_\Lambda^\per\to \{0,1\}^{L}$ be the mapping that associates a VMD tiling with its configuration . Then $\mu = (\mu_1,\ldots, \mu_L)\in \ran\sigma_\Lambda$ if and only if the follow conditions hold:
\begin{enumerate}
	\item $\mu_{x\pm 2}=0$ for every $x\in\Lambda$ such that $\mu_{x}=1$ and $\mu_{x\pm 1}= 0$.
	\item For every $x\in\Lambda$ such that $\mu_{x}=\mu_{x+1}=1$: 
	\begin{enumerate}
		\item The first three sites on either side of this pair are unoccupied, i.e., $\mu_{s\pm 3/2}=\mu_{s\pm 5/2} = \mu_{s\pm 7/2}= 0$ where $s=x+1/2$.
		\item If the next possible site is occupied, it is proceeded by a vacant site, i.e., if $\mu_{s+ 9/2}= 1$ then $\mu_{s+11/2}=0$, and if $\mu_{s-9/2}=1$ then $\mu_{s-11/2}=0$ where $s=x+1/2$.
	\end{enumerate}
\end{enumerate}
Moreover, the tiling $T \in \cT_\Lambda^\per$ for which $\mu = \sigma_\Lambda(T)$ is unique and so $\sigma_\Lambda$ is injective. (Here, all additions are understood modulo $L$.)
\end{lem}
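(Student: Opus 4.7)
The plan is to prove both the characterization of $\ran\sigma_\Lambda$ and the injectivity of $\sigma_\Lambda$ through a single local analysis that exploits the rigidity of the three tile patterns: in $M=(100)$ the sole $1$ sits at position $1$, while in $D=(011000)$ the two $1$'s sit at positions $2$ and $3$ and form the only adjacent $11$-pair among the three tile types. The key rigidity observation is therefore that an isolated $1$ in $\mu$ can only come from a monomer and an adjacent $11$-pair can only come from a dimer, and the position of the host tile is then forced.

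For the \emph{necessity} of (1)--(2), let $T\in\cT_\Lambda^\per$ and $\mu=\sigma_\Lambda(T)$. If $\mu_x=1$ with $\mu_{x\pm 1}=0$, the rigidity observation places $x$ inside a monomer covering $[x,x+2]$, so $\mu_{x+2}=0$ directly. The value $\mu_{x-2}$ is handled by a case check on the tile ending at $x-1$ (which is $V$, $M$, or $D$): in all three cases $\mu_{x-2}$ is either the final entry of the tile preceding a leading $V$, or the penultimate entry of an $M$, or the fifth entry of a $D$, each of which is $0$. This gives condition (1). If $\mu_x=\mu_{x+1}=1$, rigidity places the pair inside a dimer covering $[x-1,x+4]$; the three zeros on its right in condition (2)(a) come from within the dimer, and the three zeros on its left come from the same ``tails of tiles end in $0$'' argument. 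For condition (2)(b), a $1$ at $x-4$ can only be the opening of a monomer occupying $[x-4,x-2]$, in which case $\mu_{x-5}$ is the final $0$ of the tile preceding that monomer; the statement at $x+5$ is the mirror image.

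For \emph{sufficiency and injectivity}, given $\mu$ satisfying (1)--(2), I would build a candidate tiling $T_*(\mu)$ by the only consistent assignment: a dimer on $[x-1,x+4]$ for each adjacent pair $(x,x+1)$, a monomer on $[x,x+2]$ for each isolated $1$ at $x$, and a void at every remaining site. Conditions (1) and (2)(a) ensure that the prescribed blocks read exactly $(100)$ and $(011000)$. Overlaps among these blocks are ruled out case by case by the same conditions: two monomers can collide only at distance $\leq 2$, which (1) forbids; two dimers collide at separation $\leq 4$ by (2)(a) and at separation $5$ by (2)(b); monomer--dimer collisions are each covered by either (1) or (2)(a). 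Coverage is automatic because every $1$ in $\mu$ is absorbed into an $M$ or $D$ and the remaining sites carry $0$, so they are legitimately voids. Injectivity then follows immediately: in any $T'$ with $\sigma_\Lambda(T')=\mu$, the rigidity observation places an $M$ at every isolated $1$ and a $D$ at every adjacent $11$-pair with uniquely determined positions, so $T'=T_*(\mu)$.

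The main technical nuisance is the mod-$L$ wrap-around of the ring: the ``tail of a tile ends in $0$'' arguments are local and robust, but one has to interpret the half-integer indices of condition (2) consistently under the cyclic identification and verify that the case check on the tile preceding a dimer still goes through when that dimer straddles the base point of $\Lambda$.
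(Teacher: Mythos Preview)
Your proposal is correct and follows essentially the same approach as the paper: the paper likewise observes that isolated $1$'s must come from monomers and adjacent $11$-pairs from dimers, then (for the reverse direction) lays all dimers, then all monomers, then voids, using Conditions~1--2 to check at each step that newly placed tiles do not overlap with previously laid ones. Your case analysis is a bit more explicit than the sketch the paper gives, but the strategy, the rigidity observation, and the order of placement are the same.
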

The elementary proof is a simplified adaption of \cite[Lemma~2.1]{WY:2021} and the details are left to the reader. The forward direction is trivial and only requires counting the number of unoccupied sites between particles on different tiles. The idea for the reverse direction is based off the observation that, for any $\mu\in\ran\sigma_\Lambda$, pairs of nearest neighbor occupied sites must be covered by a dimer, while isolated occupied sites must be covered by a monomer. One then first lays all dimers, then monomers, and finally voids and uses Conditions 1 and 2 to verify at each step that a newly placed tile will not overlap with any previously laid tile. This also constructs the unique tiling $T$ associated with each configuration satisfying Conditions $1$ and $2$ to prove injectivity.\\

Two immediate consequences of the partition of VMD tilings~\eqref{tiling_decomp} and the orthogonality of configuration states are
\be \label{tiling_properties}
\caC_\Lambda^\per = \bigoplus_{R\in\cR_\Lambda^\per} \caC_\Lambda^\per(R) \quad \text{and} \quad \caC_\Lambda^\per(R)\perp \caC_\Lambda^\per(R') \;\; \text{for all} \;\; R\neq R'
\ee
where $C_\Lambda^\per$ is the \emph{subspace of all VMD tilings} and $\caC_\Lambda^\per(R)$ is \emph{VMD subspace associated to $R$}, i.e.
\begin{equation} \label{per_tiling_space}
\caC_\Lambda^\per = \spa\{\ket{\sigma_\Lambda(T)} : T\in\cT_\Lambda^\per\}, \qquad \caC_\Lambda^\per(R) = \spa\{\ket{\sigma_\Lambda(T)} : T\in\cT_\Lambda^\per(R)\}.
\end{equation}
 In particular, these satisfy the following properties.

\begin{lem}\label{lem:invariance} The subspace $\caC_\Lambda^\per(R)$ is invariant under $H_\Lambda^\per$ for each periodic root tiling $R$. Moreover, it supports a unique ground state
\be
\psi_\Lambda^\per(R) = \sum_{T\in\cT_\Lambda^\per(R)} \lambda^{d(T)} \ket{\sigma_\Lambda(T)}
\ee
where $d(T)$ is the number of dimers $D$ used in the tiling $T$.
\end{lem}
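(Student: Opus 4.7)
The plan is to verify the three claims of the lemma (invariance, existence of the stated ground state, and uniqueness) by tracking how $H_\Lambda^\per$ acts on VMD-basis vectors. For invariance, I will first observe that every electrostatic term $n_x n_{x+2}$ annihilates each $\ket{\sigma_\Lambda(T)}$ with $T \in \cT_\Lambda^\per$: in any VMD tiling, particles are either adjacent (inside a dimer) or separated by at least three sites (across any tile boundary), so the distance-two occupation $\mu_x\mu_{x+2}=1$ never occurs. Next, the four-body term $q_x^* q_x$ is supported on $[x,x+3]$ and is nonzero only on the local patterns $(0110)$ and $(1001)$. Using the rigidity encoded in Lemma~\ref{lem:configs}, these patterns can occur in a VMD tiling only as, respectively, the particles of a single dimer occupying $[x,x+5]$, or of two consecutive monomers filling the same block. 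A short computation then shows that $q_x^* q_x$ exchanges just these two local configurations (with coefficients $1,-\bar\lambda,-\lambda,|\lambda|^2$), which is precisely the replacement rule $(100)(100)\leftrightarrow (011000)$; hence the action stays within $\caC_\Lambda^\per(R)$.

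To show that $\psi_\Lambda^\per(R) \in \caG_\Lambda^\per$, I exploit frustration-freeness and verify directly that $q_x \psi_\Lambda^\per(R) = 0$ for every $x$. I pair each tiling $T \in \cT_\Lambda^\per(R)$ containing a dimer at $[x,x+5]$ with the tiling $T'$ obtained from $T$ by the single replacement at $x$; both are in $\cT_\Lambda^\per(R)$ and satisfy $d(T) = d(T') + 1$. A direct calculation yields $q_x \ket{\sigma_\Lambda(T)} = \ket{\sigma_\Lambda(T_0)}$ and $q_x \ket{\sigma_\Lambda(T')} = -\lambda \ket{\sigma_\Lambda(T_0)}$, where $T_0$ denotes the common configuration agreeing with $T$ outside $[x,x+5]$ and vanishing on that block. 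The combined weighted contribution of the pair to $q_x \psi_\Lambda^\per(R)$ is therefore $(\lambda^{d(T)} - \lambda\cdot \lambda^{d(T')})\ket{\sigma_\Lambda(T_0)} = 0$, while tilings whose restriction to $[x,x+3]$ is neither $(0110)$ nor $(1001)$ contribute nothing. Hence $q_x \psi_\Lambda^\per(R) = 0$ for every $x$, so $\psi_\Lambda^\per(R) \in \caG_\Lambda^\per$.

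For uniqueness, I take an arbitrary $\psi = \sum_{T \in \cT_\Lambda^\per(R)} c_T \ket{\sigma_\Lambda(T)}$ in $\caC_\Lambda^\per(R)\cap \caG_\Lambda^\per$. The same pairing argument identifies the coefficient of each $\ket{\sigma_\Lambda(T_0)}$ in $q_x \psi$ as $c_T - \lambda c_{T'}$, so the frustration-free condition $q_x \psi = 0$ forces $c_T = \lambda c_{T'}$ on every edge $T \leftrightarrow T'$ of the replacement graph on $\cT_\Lambda^\per(R)$. By the definition of the equivalence class this graph is connected, so iterating yields $c_T = \lambda^{d(T)} c_R$, and hence $\psi = c_R\, \psi_\Lambda^\per(R)$.

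The main technical obstacle is the rigidity step in the first paragraph: proving that a VMD tiling can display the patterns $(0110)$ or $(1001)$ on $[x,x+3]$ only via a single dimer, respectively two adjacent monomers, at $[x,x+5]$. This relies on a careful case analysis of the tile-boundary conditions in Lemma~\ref{lem:configs}. Once this geometric fact is in hand, the rest of the argument is a bookkeeping exercise in which the identity $d(T) = d(T') + 1$ produces the exact cancellations defining $\psi_\Lambda^\per(R)$.
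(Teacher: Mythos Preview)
Your proposal is correct and follows essentially the same approach as the paper: both verify that the electrostatic terms annihilate tiling states, identify that $q_x^*q_x$ acts nontrivially only on the local pair (dimer at $x$)/(two monomers at $x$), and then use the resulting relation $c_{T_D}=\lambda c_{T_M}$ together with connectivity of the replacement class to pin down the coefficients as $\lambda^{d(T)}c_R$. Your explicit separation of the existence step (pairwise cancellation in $q_x\psi_\Lambda^\per(R)$) and your graph-connectivity phrasing of uniqueness are minor stylistic variations, not a different argument.
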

\begin{proof}
We compute the action of $q_x$ and $q_x^*q_x$ on an arbitrary tiling state $\ket{\sigma_\Lambda(T)}\in\caC_\Lambda^\per(R)$. The individual tiles are constructed so $\ket{\sigma_\Lambda(T)}$ is in the kernel of each electrostatic interaction term $n_xn_{x+2}$ and
$ q_x\ket{\sigma_\Lambda(T)} = 0 $
unless $T$ has either two consecutive monomers or a dimer beginning at $x$. Thus, to show both the subspace invariance and unique ground state it is sufficient to restrict our attention to these $T$.	
	
Fix $x\in \Lambda$ and assume $T=: T_M\in\cT_\Lambda^\per(R)$ has two monomers starting at $x$. Then, the tiling $T_D$ obtained from replacing these monomers with a dimer also belongs to $\cT_\Lambda^\per(R)$. A direct computation shows
\begin{align*}
	q_x^*q_x \ket{\sigma_\Lambda(T_M)} & = |\lambda|^2\ket{\sigma_\Lambda(T_M)} -\lambda \ket{\sigma_\Lambda(T_D)} \\	
	q_x^*q_x \ket{\sigma_\Lambda(T_D)} & = -\overline{\lambda}\ket{\sigma_\Lambda(T_M)} + \ket{\sigma_\Lambda(T_D)}.
\end{align*}
The invariance property immediately follows since both vectors on the RHS belong to $\caC_\Lambda^\per(R)$. A similar calculation produces 
\be \label{gs_coeff}
q_x\left(c_{T_M} \ket{\sigma_\Lambda(T_M)} + c_{T_D} \ket{\sigma_\Lambda(T_D)}\right) = (c_{T_D}-\lambda c_{T_M}) \ket{\sigma_\Lambda(T_V)}
\ee
for any coefficients $c_{T_M},c_{T_D}\in\bC$, where $T_V$ is the tiling obtained from replacing the two monomers in $T_M$ with six voids. Consider an arbitrary state
\[\psi = \sum_{T\in\cT_\Lambda^\per(R)}c_T\ket{\sigma_{\Lambda}(T)} \in \ker(q_x).\] 
Since $T_V'\neq T_V$ for any other tiling $T_M'\in\cT_\Lambda^\per(R)$ with two consecutive monomers starting at $x$, the injectivity of $\sigma_\Lambda$ implies $c_{T_D}=\lambda c_{T_M}$ by \eqref{gs_coeff}. By frustration-freeness, if $\psi$ is a ground state of $H_\Lambda^\per$, then iterating this procedure over all $x\in\Lambda$ yields $c_T = \lambda^{d(T)}c_R$, from which the claim follows.
\end{proof}

The main result in this subsection shows that there are no other ground states beyond those identifies in the previous lemma for rings of length $ L \geq 6$. A similar result can also be proved for $L=4,5$. However, the definitions of the tiles would need to be adjusted to fit the smaller volume.
\begin{theorem}\label{thm:gss}
	For any $\Lambda = [1,L]$ with $L\geq 6$, the ground state space is a subspace of the tiling space, i.e. $\caG_\Lambda^\per \subseteq \caC_\Lambda^\per$. As a consequence,
	\[
	\{\psi_\Lambda^\per(R) : R\in\cR_\Lambda^\per\}
	\]
	is an orthogonal basis of the ground state space.
\end{theorem}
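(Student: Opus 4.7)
The plan is to show that every $\psi\in\caG_\Lambda^\per$ has support contained in $\ran\sigma_\Lambda$; the basis statement then follows from Lemma~\ref{lem:invariance} together with the decomposition in~\eqref{tiling_decomp} and the orthogonality of the tiling subspaces $\caC_\Lambda^\per(R)$ associated to distinct root classes. Expanding $\psi=\sum_\mu c_\mu\ket{\mu}$, I would first use frustration-freeness and the fact that each $n_xn_{x+2}$ is diagonal in the occupation basis to conclude $c_\mu=0$ whenever $\mu_x\mu_{x+2}\neq 0$ for some $x$. By Lemma~\ref{lem:configs}, any surviving configuration $\mu\notin\ran\sigma_\Lambda$ must then violate condition $2$, since condition $1$ is automatic once the electrostatic constraint holds.

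The main tool is the action of $q_y$ on an electrostatically admissible $\ket{\mu}$. The constraint $\mu_x\mu_{x+2}=0$ forces the two possibilities $\mu_{y+1}=\mu_{y+2}=1$ and $\mu_y=\mu_{y+3}=1$ to be mutually exclusive, so at most one of the two terms of $q_y$ contributes, and $q_y\ket{\mu}$ equals either $\ket{\tau}$ (when a nearest-neighbor pair sits at $(y+1,y+2)$), $-\lambda\ket{\tau}$ (when isolated particles sit at $y$ and $y+3$), or $0$; in the non-trivial cases $\tau$ agrees with $\mu$ outside the window $[y,y+3]$ and is empty on that window. Equating the coefficient of each $\ket{\tau}$ in $q_y\psi$ to zero then yields the local relation
\[
c_{\tau^M}\;=\;\lambda\,c_{\tau^D},
\]
where $\tau^M$ and $\tau^D$ are obtained from $\tau$ by reinserting a pair at $(y+1,y+2)$ or an isolated particle pair at $(y,y+3)$ respectively, under the convention that either coefficient vanishes as soon as the corresponding configuration violates the electrostatic constraint.

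With this identity in hand I would dispatch the condition-$2$ violations. If $\mu$ exhibits a nearest-neighbor pair at $(x,x+1)$ together with $\mu_{x+4}=1$ (or symmetrically $\mu_{x-3}=1$), then taking $y=x-1$ places a particle of $\tau^D$ at $x+2$ at distance two from the blocker, so $\tau^D$ itself fails the electrostatic constraint and the relation collapses to $c_\mu=c_{\tau^M}=0$. If instead $\mu$ carries two pairs at $(x,x+1)$ and $(x+5,x+6)$, I would apply the relation at $y=x+4$ to express $c_\mu$ as a scalar multiple of the coefficient of the configuration $\sigma^D$ with particles at $x,x+1,x+4,x+7$; this $\sigma^D$ in turn exhibits the previous type of violation, and the first step gives $c_{\sigma^D}=0$, hence $c_\mu=0$. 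Running this argument over all relevant $y\in\Lambda$ (the periodic wraparound is harmless for $L\geq 6$) proves $\psi\in\caC_\Lambda^\per$.

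I expect the $2(b)$ case to be the main subtlety, since a single application of $q_y$ only links $\mu$ to another non-tiling configuration; one has to verify that the second, carefully chosen move lands on a configuration whose stretched-dimer extension is \emph{forced} to violate the electrostatic constraint. The hypothesis $L\geq 6$ enters at exactly this level, ensuring that the four-site windows used in successive applications of the local relation do not wrap around to produce spurious coincidences among the sites involved in the moves.
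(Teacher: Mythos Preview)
Your argument is correct and matches the paper's proof in structure: expand in the occupation basis, use frustration-freeness and the diagonal electrostatic terms to kill configurations with $e_\Lambda(\mu)>0$, then exploit the local relation $c_{\tau^M}=\lambda c_{\tau^D}$ coming from $q_y\psi=0$ to eliminate the remaining non-tiling configurations, reducing condition~$2(b)$ to~$2(a)$ via one extra application. The only tactical difference is your choice of window in the $2(a)$ step---you take $y=x-1$ and spread the pair at $(x,x+1)$ so that the stretched configuration directly violates electrostatics, whereas the paper takes $y=x+1$ and contracts the pair $(x+1,x+4)$; both work, and your variant has the small bonus of not requiring $\lambda\neq 0$ at that step (note, incidentally, that your $M$/$D$ labels are reversed relative to the paper's tile conventions).
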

\begin{proof}
Consider any $\psi=\sum_{\mu\in\{0,1\}^L} \psi(\mu)\ket{\mu} \in \ker(H_\Lambda^\per)$ and rewrite
\be\label{expected_energy}
\braket{\psi}{H_\Lambda^\per \psi} = \sum_{\mu\in\{0,1\}^L}e_\Lambda(\mu)|\psi(\mu)|^2 + \sum_{\nu\in\{0,1\}^L}\sum_{x=1}^L|\braket{\nu}{q_x\psi}|^2, \qquad e_\Lambda(\mu):=\sum_{x=1}^L\mu_x\mu_{x+2} . 
\ee
Introducing the convention that $\psi(\emptyset)=0$, it is straightforward to see that 
\be\label{dipole_energy}
|\braket{\nu}{q_x\psi}|^2 = \left|\psi(\alpha_{x+1}^*\alpha_{x+2}^*\nu)-\lambda\psi(\alpha_{x}^*\alpha_{x+3}^*\nu)\right|^2
\ee
where for each $x\in\Lambda$ the configuration raising operator 
\be\label{raising_op}
\alpha_x^*: \{0,1\}^L\cup \{\emptyset\}\to \{0,1\}^L\cup\{\emptyset\}\ee 
is defined so that if $\mu_x = 0$, then $\mu\mapsto \alpha_x^*\mu$ is the configuration obtained from increasing $\mu_x$ by one, and otherwise $\alpha_x^*\mu = \emptyset$. Similarly, the configuration lowering operator $\alpha_x: \{0,1\}^L\cup \{\emptyset\}\to \{0,1\}^L\cup\{\emptyset\}$ is such that $\mu\mapsto\alpha_x\mu$ decreases $\mu_x$ by one if $\mu_x=1$, and $\alpha_x\mu =\emptyset$ otherwise.

We use the characterization in Lemma~\ref{lem:configs} to show that $ \mu \in \ran\sigma_\Lambda $ if $\psi(\mu)\neq 0$.  

If $e_\Lambda(\mu)> 0$, then $\psi(\mu)=0$ by \eqref{expected_energy}. Thus, assume the electrostatic condition $e_\Lambda(\mu)= 0$ holds. If $\mu$ is the zero configuration, then it trivially satisfies the conditions of Lemma~\ref{lem:configs}. Otherwise, pick $x\in\Lambda$ so that $\mu_x =1$. The electrostatic condition implies $ \mu_{x\pm 2} = 0 $ and $  \mu_{x-1}\mu_{x+1}=0 $. 
	If $\mu_{x\pm 1} = 0$, then Condition 1 of Lemma~\ref{lem:configs} holds. Otherwise we are left to consider whether Condition 2 is satisfied.
	Assume without loss of generality that $\mu_{x+1}=1$ and $\mu_{x-1}=0$. Then $\mu_{x+3}=0$ again by the electrostatic condition. Setting $s=x+1/2$, this collectively implies that $\mu_{s\pm 3/2}=\mu_{s\pm 5/2}=0$. If $\mu_{s+7/2}=\mu_{x+4}=1$, then by \eqref{dipole_energy},
	\[
	0=|\braket{\nu}{q_{x+1}\psi}|^2 = |\psi(\eta)-\lambda\psi(\mu)|^2
	\]
	where $\nu := \alpha_{x+1}\alpha_{x+4}\mu$ and $\eta := \alpha_{x+2}^*\alpha_{x+3}^*\nu$. However, $e_\Lambda(\eta)\geq \eta_x\eta_{x+2}=1$ and so $\psi(\mu)=\psi(\eta)=0$. The analogous argument holds if $\mu_{s-7/2}=1.$ Thus, Condition 2(a) must be satisfied if $\psi(\mu)\neq 0$.
	
	To check Condition 2(b), assume that $\mu_{s+9/2}=\mu_{s+11/2}=1$. Since $s+9/2=x+5$, the electrostatic condition guarantees $\mu_{x+7}=0$, and applying \eqref{dipole_energy},
	\[
		0=|\braket{\nu}{q_{x+4}\psi}|^2 = |\psi(\mu)-\lambda\psi(\eta)|^2
	\]
	where $\nu := \alpha_{x+5}\alpha_{x+6}\mu$ and $\eta := \alpha_{x+4}^*\alpha_{x+7}^*\nu$. However, $\psi(\eta)=0$ by the previous case, and so once again $\psi(\mu)=0$. An analogous argument holds if $\mu_{s-9/2}=\mu_{s-11/2}=1$. 
	
	Thus, one concludes that $\psi(\mu)\neq 0$ only if $\mu \in\ran\sigma_\Lambda$, completing the proof.
\end{proof}

\subsection{Open boundary conditions}\label{sec:BVMD}

A similar (but slightly more complicated) tiling construction describes the invariant subspaces that support ground states of the Hamiltonian with open boundary conditions. The relevant lattice tilings come from restricting periodic VMD tilings to a subinterval. Additional boundary tiles emerge when the restriction cuts through the interior of a monomer or dimer tile. Listing only those truncated tiles that cannot be built from other tiles, this produces the following set of boundary tiles:
\begin{enumerate}
	\item \emph{On the left boundary:}
	\begin{enumerate}
		\item The dimer $B_l = (11000)$ covering five sites with particles on the first two sites.
	\end{enumerate}
	\item \emph{On the right boundary:}
	\begin{enumerate}
		\item The monomer $M_1 = (1)$ covering one site with a single particle.
		\item The monomer $M_2=(10)$ covering two sites with a particle on the first site.
		\item The dimer $D_1 = (0110)$ covering four sites with a particle on the second and third sites.
		\item The dimer $D_2 = (01100)$ covering five sites with a particle on the second and third sites.
		\item The dimer $B_r = (011)$ covering three sites with particles on the last two sites.		
	\end{enumerate}
\end{enumerate}
A \emph{BVMD tiling} $T$ of the interval $\Lambda = [a,b]$ is any covering of $\Lambda$ consisting of voids $V$, monomers $M$, dimers $D$, and boundary tiles. Of course, one can place any of the bulk tiles $\{V, \, M, \, D\}$ at the boundary. However, boundary tiles may only be placed on their respective boundary.

The truncated monomers, $M_i$, and dimers, $D_i$, for $i=1,2$ give rise to new (reversible) replacement rules:
\be \label{truncated_replacements}
(100)(10)\leftrightarrow (01100), \qquad (100)(1)\leftrightarrow (0110) \, .
\ee
The dimers $B_l$ and $B_r$ cannot be replaced as the Hamiltonian preserves particle number and center of mass. Combining these rules with \eqref{replacement_rule} once again produces an equivalence relation on the set of all BVMD tilings $\cT_\Lambda$, where we say two tilings are connected, $T'\leftrightarrow T$, if one tiling can be transformed into the other with a finite number of replacements.  The equivalence classes are again labeled by $\cR_\Lambda = \{R\in\cT_\Lambda : R \text{ a root tiling}\}$ analogous to \eqref{tiling_decomp} where we classify $R$ as a \emph{root tiling} if it does not contain any dimers $D_i$, $i=1,2,3$, where $D_3:=D$, see Figure~\ref{fig:obc_tilings}.

\begin{figure}
	\begin{center}
		\includegraphics[scale=.48]{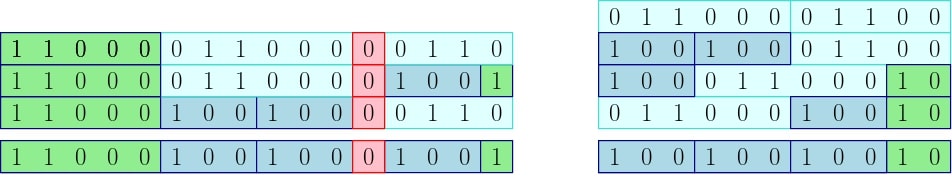}
	\end{center}
\caption{The set of all connected tilings $T\leftrightarrow R$ associated with a root tiling $R$ on an interval of 16 and 11 sites, respectively. The second is the set of $T\leftrightarrow M_4^{(2)}$ associated with the squeezed Tao-Thouless state $\vp_4^{(2)}$.}
\label{fig:obc_tilings}
\end{figure}

Severing a periodic VMD tiling between the first and last site of the ring creates a BVMD tiling. As a result, the set of VMD tilings $\cT_\Lambda^\per$ can be identified with a subset of $\cT_\Lambda$. The configuration map $\sigma_\Lambda: \cT_\Lambda \to \{0,1\}^{|\Lambda|}$ remains injective after extending the domain, and the classification from Lemma~\ref{lem:configs} again holds with the minor adjustment that the conditions are trivially assumed to be satisfied for any $x\in\Lambda^c$ rather than considering addition modulo $L$. Analogous versions of the other properties from Section~\ref{sec:Tiling_Spaces} hold for the BVMD tilings. To state these, we denote by
\be\caC_\Lambda = \spa\{\ket{\sigma_\Lambda(T)} : T\in \cT_\Lambda\}, \qquad \caC_\Lambda(R) = \{\ket{\sigma_\Lambda(T)} : T\leftrightarrow R\}
\ee
the \emph{space of all BVMD tilings} and the \emph{BVMD subspace associated with $R\in\cR_\Lambda$}, respectively. 

\begin{theorem}\label{thm:BVMD}
Fix an interval $\Lambda = [a,b]$. The following properties apply to each root tiling $R\in \cR_\Lambda$:
\begin{enumerate}
	\item $\caC_\Lambda(R) \perp \caC_\Lambda(R')$ for any root $R'\neq R$.
	\item $\caC_\Lambda(R)$ is invariant under $H_\Lambda$. As a consequence, so is $\caC_\Lambda = \bigoplus_{R\in\cR_\Lambda}\caC_\Lambda(R)$.
	\item $\cG_\Lambda\cap \caC_\Lambda(R) = \spa\{\psi_\Lambda(R)\}$ where
	\be\label{BVMD_state}
	\psi_\Lambda(R) = \sum_{T\leftrightarrow R} \lambda^{d(T)}\ket{\sigma_\Lambda(T)}
	\ee
	and $d(T)$ is the number of dimers $D_i$, $i=1,2,3,$ in the tiling $T$. 
	\item If $|\Lambda|\geq 8$, then $\ker(H_\Lambda) = \spa\{\psi_\Lambda(R) : R\in\cR_\Lambda\}$.
\end{enumerate}
\end{theorem}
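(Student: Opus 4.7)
The strategy is to mirror the periodic analysis (Lemma~\ref{lem:invariance} and Theorem~\ref{thm:gss}) while carefully tracking the new boundary tiles. Parts~1--3 are essentially direct adaptations; the bulk of the work lies in Part~4.

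For Part~1, injectivity of $\sigma_\Lambda$ on $\cT_\Lambda$ (the stated extension of Lemma~\ref{lem:configs}) together with the partition of $\cT_\Lambda$ into equivalence classes immediately yields that distinct roots produce disjoint subsets of the occupation basis, so $\caC_\Lambda(R)\perp\caC_\Lambda(R')$. For Part~2, each electrostatic term $n_x n_{x+2}$ annihilates every tiling state, since no tile---boundary tiles included---places two particles at distance two. For each $x\in[a,b-3]$, either $q_x$ annihilates $\ket{\sigma_\Lambda(T)}$, or the replacement rules \eqref{replacement_rule}--\eqref{truncated_replacements} give a two-dimensional action on the pair $\ket{\sigma_\Lambda(T_M)}, \ket{\sigma_\Lambda(T_D)}$ identical in form to that in Lemma~\ref{lem:invariance}, preserving $\caC_\Lambda(R)$. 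Part~3 then follows by frustration-freeness: wherever a replacement is possible the kernel equation forces $c_{T_D}=\lambda c_{T_M}$, and iterating over all positions connecting tilings in the equivalence class of $R$ determines $c_T = \lambda^{d(T)} c_R$ up to a global constant.

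For Part~4, I would repeat the energy-decomposition argument of Theorem~\ref{thm:gss}. Writing $\braket{\psi}{H_\Lambda\psi}$ as a sum of non-negative terms, vanishing of each summand yields (i) $e_\Lambda(\mu):=\sum_{x=a}^{b-2}\mu_x\mu_{x+2}=0$ for every $\mu$ with $\psi(\mu)\neq 0$, and (ii) the relations $\psi(\alpha_{x+1}^*\alpha_{x+2}^*\nu)=\lambda\,\psi(\alpha_x^*\alpha_{x+3}^*\nu)$ for every $x\in[a,b-3]$ and every $\nu$. Walking along $\mu$ and running the case analysis from Theorem~\ref{thm:gss}, at any site $x$ sufficiently far from both endpoints one recovers the conditions of Lemma~\ref{lem:configs}; near a boundary the chain of implications terminates earlier, but is consistent precisely with the additional boundary tiles $M_1, M_2, D_1, D_2, B_l, B_r$. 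The hypothesis $|\Lambda|\geq 8$ ensures that the left- and right-boundary windows do not overlap, allowing the two endpoints to be treated independently.

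The main obstacle is the boundary bookkeeping in Part~4. Near $a$ and $b$, the set of $q_x$-kernel constraints is strictly weaker than in the bulk, so one must verify that the configurations compatible with this reduced constraint set are \emph{exactly} the image of $\cT_\Lambda$ under $\sigma_\Lambda$---no extraneous configurations slip in. In particular, the frozen dimers $B_l$ and $B_r$, which are not touched by any replacement rule, must be shown to genuinely arise as ground-state tilings, while low-lying edge configurations analogous to those mentioned after~\eqref{def:Hopen}---which are \emph{not} in $\caG_\Lambda$ for $\lambda\neq 0$---must be excluded by applying relation~(ii) at a suitable boundary-adjacent position to obtain a contradiction with $e_\Lambda = 0$ after raising. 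Formulating the boundary-adjusted version of Lemma~\ref{lem:configs} and tracking each potential failure mode at every site in the two boundary windows is the technical heart of the argument.
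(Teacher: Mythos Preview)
Your proposal is correct and follows precisely the approach the paper itself indicates: the authors simply state that the proof is a slight variation of the periodic arguments in Section~\ref{sec:Tiling_Spaces} (Lemma~\ref{lem:invariance} and Theorem~\ref{thm:gss}), and mention that an alternate proof of Part~4 appears in \cite[Theorem~2.15]{NWY:2021}. Your outline in fact supplies more detail on the boundary bookkeeping than the paper does, and your identification of the role of the hypothesis $|\Lambda|\geq 8$ is correct.
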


The proof is a slight variation of arguments in Section~\ref{sec:Tiling_Spaces}. An alternate proof of the last property using a different approach can be found in \cite[Theorem 2.15]{NWY:2021}.

Several other properties of the BVMD states $\psi_\Lambda(R)$ will be important for the analysis in Section~\ref{sec:MM}. Any tiling $T\in\cT_\Lambda$ can be written in terms of its ordered tiling  $T=(T_1,T_2,\ldots T_k)$ where 
\[
T_1 \in\{B_l\}\cup \cD_{\textrm{bulk}}, \quad T_k\in\{B_r, M_i, D_i : i = 1,2\}\cup \cD_{\textrm{bulk}}. 
\]
and $T_i \in \cD_{\textrm{bulk}}:= \{V,M,D\}$ for all $2\leq i \leq k-1$. Setting $M_3=M$, a special case is the root tiling
\[
M_L^{(i)} = (M,M,\ldots, M, M_i)
\]
which covers an interval $\Lambda_i$ of $3(L-1)+i$ sites with $L$ monomers for $i=1,2,3$, see Figure~\ref{fig:obc_tilings}. The corresponding ground state $\varphi_L^{(i)}:=\psi_{\Lambda_i}(M_L^{(i)})$ is called a \emph{squeezed Tao-Thouless state}, and all BVMD states $\psi_\Lambda(R)$ can be written (up to boundary terms) as a product of such states and void states $\ket{0}$. Namely, if $\{v_1, \ldots, v_n\}$ denotes the ordered set of sites covered with voids by $R=(R_1,\ldots, R_k)$, then
\be\label{fragmentation}
\psi_{\Lambda}(R) = \psi^l\otimes\vp_{L_0}\otimes\ket{0}_{v_1}\otimes\ldots\otimes\vp_{L_{n-1}}\otimes\ket{0}_{v_n}\otimes\psi^r
\ee
where $\vp_L :=\vp_L^{(3)}$, $L_i$ is the number of monomers between $v_i$ and $v_{i+1}$ in $R$, $L_0$ is the number of monomers to the left of $v_1$, $L_{n}$ is the number of monomers to the right of $v_n$ and
\[
\psi^l = \begin{cases} \ket{11000}  & R_1 = B_l \\ 1 & \text{otherwise}  \end{cases},
\qquad
\psi^r = \begin{cases} \vp_{L_{n}}\otimes\ket{011}  & R_k = B_r \\ \vp_{L_{n}}^{(i)}  & R_{k}=M_i \\ 1 & \text{otherwise}  \end{cases},
\]
see Figure~\ref{fig:obc_tilings}. The choice of the right boundary condition, and placement of the last void $v_n$ will play a critical role in classifying the tiling spaces $\caC_\Lambda(R)$ and, hence, the BVMD states $\psi_\Lambda(R)$ for the martingale method in Section~\ref{sec:MM}.
 
The squeezed Tao-Thouless states $\vp_L^{(i)}$ also satisfy a number of useful recursion relations. To state them, we slightly abuse the notation and write
\[
\ket{D_1} = \ket{0110}, \quad \ket{D_2} = \ket{01100}, \quad \ket{D_3} = \ket{011000}
\]
as well as the simplified form $\ket{D}:=\ket{D_3}$ for the configuration states associated with the dimers $D_i$. Similarly we write $\ket{M_i}$ for the configuration state associated with a monomer of length $i=1,2,3$.
Then for each $i\in\{1,2,3\}$ the squeezed Tao-Thouless states satisfy the following properties:
\begin{enumerate}
	\item For any $1\leq j<i$ and $n\in\bN$, $\vp_n^{(i)}=\vp_n^{(j)}\otimes \ket{0}^{\otimes i-j}$.
	\item For any $l,r\in\bN$ with $r\geq 2$
	\be\label{recursion1}
	\vp_{l+r}^{(i)} = \vp_l\otimes\vp_r^{(i)} + \lambda \vp_{l-1}\otimes \ket{D}\otimes \vp_{r-1}^{(i)}.
	\ee
	\item For any $n\geq 2$
	\be\label{recursion2}
	\vp_n^{(i)} = \vp_{n-1}\otimes \ket{M_i} + \lambda \vp_{n-2}\otimes \ket{D_i}.
	\ee
	As a consequence, $
	\|\vp_n^{(i)}\|^2 = \|\vp_{n-1}\|^2+|\lambda|^2\|\vp_{n-2}\|^2$ and the ratio 
	\be\label{beta_n}
	\beta_n:=\frac{\|\vp_{n-1}\|^2}{\|\vp_n\|^2} =\frac{1}{\beta_+}\frac{1-\beta^n}{1-\beta^{n+1}}\ee
	is a convergent sequence where  $\beta_{\pm} = (1\pm \sqrt{1+4|\lambda|^2})/2$ and $\beta = \frac{\beta_-}{\beta_+}\in(-1,0)$.
\end{enumerate}
The various state decompositions are immediate after considering the set of tilings $T\leftrightarrow M_n^{(i)}$ and the definition of $\psi_{\Lambda_i}(M_n^{(i)})$, see Figure~\ref{fig:obc_tilings} and \eqref{BVMD_state}. The norm equality is trivial as the two states on the RHS of \eqref{recursion2} are orthogonal. A full proof of~\eqref{beta_n} is given in \cite[Lemma 2.13]{NWY:2021}.

\subsection{Embeddings and isospectrality}\label{sec:isospectral}

Applying a spectral-gap method to a local Hamiltonian on some Hilbert space $\cH_\Lambda$ requires considering the action of local Hamiltonians $H_{\Lambda'}$ associated to smaller volumes $\Lambda'\subseteq\Lambda$. For the model at hand, the tiling spaces will play the role of the local Hilbert space.  To illuminate a useful isospectral relationship, we examine the relation between the (B)VMD tiling spaces associated with the Hamiltonians
\be\label{Ham_relations}
H_{\Lambda'} \leq H_\Lambda \leq H_\Lambda^\per
\ee
where $\Lambda' \subseteq \Lambda = [1,L]$ and we use the standard identification $H_{\Lambda'}:=H_{\Lambda'}\otimes\1_{\Lambda \setminus \Lambda'}$.

Both tiling spaces $\caC_\Lambda^\#$, with $\#\in\{\cdot , \per\}$, from Sections~\ref{sec:Tiling_Spaces}-\ref{sec:BVMD} were defined so that they are invariant under all interaction terms associated $H_\Lambda^\#$. Thus, these subspaces are then also invariant under all of the interaction terms associated with $H_{\Lambda'}$, and so
\be\label{invariance_embedding}
H_{\Lambda'}\caC_\Lambda^\# \subseteq \caC_\Lambda^\# .
\ee
The Hamiltonian $H_{\Lambda'}$ only acts non-trivially on the portion of a tiling state $\ket{\sigma_\Lambda(T)}$ that covers $\Lambda'$. This corresponds to the configuration obtained from truncating the tiling $T$ to $\Lambda'$. The boundary tiles were introduced in Section~\ref{sec:BVMD} so that the restriction agrees with a tiling on $\Lambda'$. Hence, there is a unique $T\restriction_{\Lambda'}\in\cT_{\Lambda'}$ such that
\be\label{restriction}
\sigma_\Lambda(T)\restriction_{\Lambda'} = \sigma_{\Lambda'}(T\restriction_{\Lambda'}).
\ee
A natural question is whether $\caC_\Lambda^\#$ can be written in terms of BVMD tiling spaces $\caC_{\Lambda'}(R')$.

\begin{lem}\label{lem:decomp}
	Fix $\Lambda'=[a,b]\subseteq [1,L]=\Lambda$. Then
	\be\label{direct_sum2}
	\caC_\Lambda = \bigoplus_{R'\in\cR_{\Lambda'}}\bigoplus_{\substack{\mu\in\ran\sigma_\Lambda \, :\\ \mu = (\mu^l,\sigma_{\Lambda'}(R'),\mu^r)}} \ket{\mu^l}\otimes \caC_{\Lambda'}(R')\otimes\ket{\mu^r}.
	\ee
Moreover, if $|\Lambda|\geq |\Lambda'|+4$, the same equality holds if  one replaces $\caC_\Lambda$ in the LHS with $\caC_\Lambda^\per$, and $\sigma_\Lambda$ in the RHS with $\sigma_\Lambda\restriction_{\cT_\Lambda^\per}$.
\end{lem}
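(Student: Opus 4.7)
The plan is to prove the direct-sum decomposition in three stages---(i) pairwise orthogonality of the summands, (ii) the $\subseteq$ inclusion, (iii) the $\supseteq$ inclusion---and then to port the argument to the periodic setting. The organizing principle is that any $T\in\cT_\Lambda$ restricts via~\eqref{restriction} to a BVMD tiling $T\restriction_{\Lambda'}\in\cT_{\Lambda'}$ lying in a unique equivalence class $\cT_{\Lambda'}(R')$ with root $R'\in\cR_{\Lambda'}$, and the configuration factorizes as $\sigma_\Lambda(T) = (\mu^l, \sigma_{\Lambda'}(T\restriction_{\Lambda'}), \mu^r)$ with $\mu^l,\mu^r$ the restrictions of $\sigma_\Lambda(T)$ to $[1,a-1]$ and $[b+1,L]$.

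Orthogonality is immediate: two summands with distinct roots $R'\neq R''$ are orthogonal by Theorem~\ref{thm:BVMD}(1), and two summands with the same root but distinct boundary data are orthogonal because configuration basis vectors are. For the $\subseteq$ inclusion, the factorization above places $\ket{\sigma_\Lambda(T)}$ in $\ket{\mu^l}\otimes\caC_{\Lambda'}(R')\otimes\ket{\mu^r}$, provided that this summand actually appears, i.e., $(\mu^l,\sigma_{\Lambda'}(R'),\mu^r)\in\ran\sigma_\Lambda$. To verify this, I would start from $T$ and lift the sequence of monomer--dimer replacements transforming $T\restriction_{\Lambda'}$ into $R'$ to a sequence of bulk replacements in $\cT_\Lambda$, producing a tiling with the desired configuration. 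The $\supseteq$ inclusion runs in reverse: for an admissible tuple $(R',\mu^l,\mu^r)$, let $\tilde T$ be the unique tiling (by Lemma~\ref{lem:configs}) realizing $(\mu^l,\sigma_{\Lambda'}(R'),\mu^r)$; for any $T'\leftrightarrow R'$ in $\cT_{\Lambda'}$, lift the replacement sequence transforming $R'$ into $T'$ back to $\cT_\Lambda$ starting at $\tilde T$, placing $\ket{\mu^l}\otimes\ket{\sigma_{\Lambda'}(T')}\otimes\ket{\mu^r}$ in $\caC_\Lambda$.

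The crux---and what I expect to be the main obstacle---is that the lifting of replacements preserves the boundary configurations $\mu^l,\mu^r$. Bulk replacements $(100)(100)\leftrightarrow(011000)$ supported entirely inside $\Lambda'$ are trivially harmless. The delicate cases are the boundary replacements $(M\,M_i)\leftrightarrow(D_i)$ at the right edge of $\Lambda'$: each lifts to a bulk replacement $(MM)\leftrightarrow(D)$ in $\cT_\Lambda$ whose last $3-i$ sites sit inside $\Lambda\setminus\Lambda'$, and a direct inspection of the tile definitions shows that both sides of the replacement carry zeros on those sites, leaving $\mu^r$ intact. The left edge is automatic since $B_l$ has no replacement partner.

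For the periodic statement, the three-step argument carries over after observing that the restriction of a periodic VMD tiling to $\Lambda'$ still yields a BVMD tiling of $\Lambda'$ (with boundary tiles arising from bulk tiles truncated by either side of $\Lambda'$). The hypothesis $|\Lambda|\geq|\Lambda'|+4$ ensures that the complement arc $\Lambda\setminus\Lambda'$ is long enough that no single bulk tile can simultaneously touch both boundaries of $\Lambda'$ by wrapping through $\Lambda\setminus\Lambda'$, so the boundary data on the two sides of $\Lambda'$ stay decoupled and the lifted replacements continue to preserve $(\mu^l,\mu^r)$.
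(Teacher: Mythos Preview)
Your three-step argument is essentially the paper's: both reduce the equality to the equivalence $(\mu^l,\sigma_{\Lambda'}(T'),\mu^r)\in\ran\sigma_\Lambda \Leftrightarrow (\mu^l,\sigma_{\Lambda'}(R'),\mu^r)\in\ran\sigma_\Lambda$ for $T'\leftrightarrow R'$, and both justify it by observing that the truncated replacements $(M,M_i)\leftrightarrow D_i$ become the bulk replacement $(M,M)\leftrightarrow D$ after appending $3-i$ zeros, so that lifting replacements across the boundary of $\Lambda'$ leaves $(\mu^l,\mu^r)$ fixed.

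One small correction concerning the periodic case: your reading of the hypothesis $|\Lambda|\geq|\Lambda'|+4$ is not quite right---a dimer covers six sites and \emph{can} wrap through a four-site complement arc to touch both ends of $\Lambda'$, so the constraint does not enforce the decoupling you describe (and the lifting argument does not need it, since lifted replacements reach at most two sites into the complement). The paper's stated reason for the constraint is instead that it guarantees every BVMD root $R'\in\cR_{\Lambda'}$---in particular one beginning with $B_l$ and ending in $B_r$, which needs one extra site on the left and three on the right to complete to bulk dimers---extends to a periodic VMD tiling of $\Lambda$; this is what makes the decomposition nondegenerate and usable in Corollary~\ref{cor:isospectral}.
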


In the above, $\mu_l$ and $\mu_r$ are the subconfigurations of $\mu$ supported on the subinterval of $\Lambda$ to the left and right of $\Lambda'$, respectively. In the case that one of these subintervals is empty, we use the convention $\ket{\mu^\#} = 1$. Here, we also use the notation
\[
\psi_l\otimes \cV \otimes \psi_r := \{\psi_l\otimes\psi\otimes\psi_r: \psi\in\cV\}
\]
for a subspace $\cV\subseteq \cH$. The constraint $|\Lambda|\geq |\Lambda'|+4$ for the case of periodic boundary conditions here and Corollary~\ref{cor:isospectral} below simply guarantees that all BVMD tilings on $\Lambda'$ can extend to a periodic VMD tiling on $\Lambda$.

The proof follows from the same reasoning used in~\cite[Lemma~3.3]{WY:2021}. Since the truncated replacement rules from \eqref{truncated_replacements} agree with the original rule \eqref{replacement_rule} after appending one or two additional zeros, the claimed equality \eqref{direct_sum2} is a consequence of \eqref{restriction} noticing that for any $T'\leftrightarrow R'\in\cR_{\Lambda'}$:
\be \label{replacement_equivalence}
(\mu^l,\sigma_{\Lambda'}(T'),\mu^r)\in\ran\sigma_\Lambda \iff (\mu^l,\sigma_{\Lambda'}(R'),\mu^r)\in\ran\sigma_\Lambda.
\ee
This can be checked using the open boundary conditions version of Lemma~\ref{lem:configs}. We illustrate \eqref{replacement_equivalence} in Figure~\ref{fig:truncated_tilings}, and leave the details of its proof to the reader.\\

\begin{figure}
	\begin{center}
		\includegraphics[scale=.21]{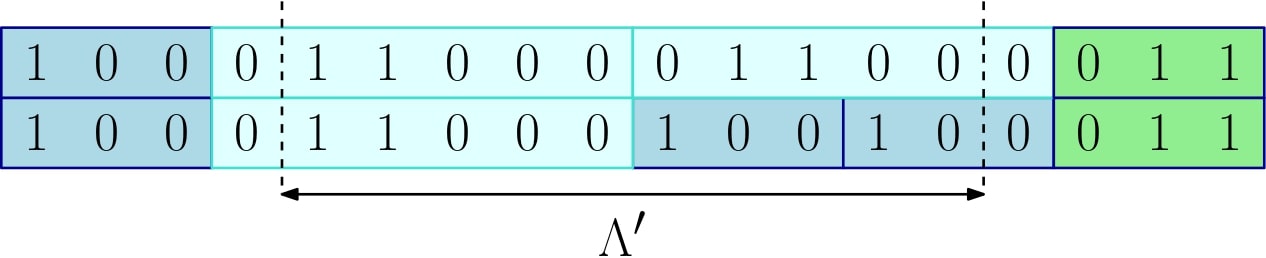}
	\end{center}
\caption{A restricted tiling $T'=T\restriction_{\Lambda'}\in\cT_{\Lambda'}$. Freezing the configurations outside of $\Lambda'$ and replacing $T'$ with any $T''\leftrightarrow T'$ again produces a tiling configuration on $\Lambda$.}
\label{fig:truncated_tilings}
\end{figure}

%

The block decomposition in Lemma~\ref{lem:decomp} implies that any operator $A\in\cB(H_{\Lambda'})$ that leaves $\caC_{\Lambda'}(R')$ invariant for all $R'\in\cR_{\Lambda'}$ also leaves $\caC_\Lambda^\#$ invariant for each $\#\in\{\cdot ,\per\}$. This guarantees the following isospectral relationships which are used in the application of the martingale method and finite size criterion in Sections~\ref{sec:MM}-\ref{sec:FSC} below.

\begin{cor}\label{cor:isospectral}
Let $A\in\cB(\cH_{\Lambda'})$ be an operator which leaves $\caC_{\Lambda'}(R')$ invariant for each $R'\in\cR_{\Lambda'}$. Then:
\begin{enumerate}
	\item $\|A\otimes\1_{\Lambda\setminus\Lambda'}\|_{\caC_\Lambda} = \|A\|_{\caC_{\Lambda'}}$, where the subscript denotes the Hilbert space with which the norm is taken.
	\item If $A^*=A$, then $\spec(A\otimes\1_{\Lambda\setminus\Lambda'}\restriction_{\caC_\Lambda})=\spec(A\restriction_{\caC_{\Lambda'}})$
\end{enumerate}
Moreover, if $|\Lambda|\geq |\Lambda'|+4$, the same relations hold when one replaces $\caC_\Lambda$ with $\caC_\Lambda^\per$.
\end{cor}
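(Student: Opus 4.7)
The plan is to reduce both statements to the orthogonal block decomposition from Lemma~\ref{lem:decomp} and to exploit that $A\otimes\1_{\Lambda\setminus\Lambda'}$ respects this decomposition.

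First, I would fix $R'\in\cR_{\Lambda'}$ together with any admissible pair $(\mu^l,\mu^r)$ appearing in~\eqref{direct_sum2} and consider the block $\cV_{R',\mu^l,\mu^r}:=\ket{\mu^l}\otimes\caC_{\Lambda'}(R')\otimes\ket{\mu^r}$. The hypothesis on $A$ makes this block invariant under $A\otimes\1_{\Lambda\setminus\Lambda'}$, and the obvious isometry $\psi\mapsto\ket{\mu^l}\otimes\psi\otimes\ket{\mu^r}$ intertwines $A\restriction_{\caC_{\Lambda'}(R')}$ with $(A\otimes\1_{\Lambda\setminus\Lambda'})\restriction_{\cV_{R',\mu^l,\mu^r}}$. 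Hence these two restrictions share norms and, in the self-adjoint case, spectra.

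Since the decomposition is an orthogonal direct sum, this gives
\[
\|A\otimes\1_{\Lambda\setminus\Lambda'}\|_{\caC_\Lambda}=\sup_{R'\in\widetilde{\cR}}\|A\|_{\caC_{\Lambda'}(R')},\qquad\spec\bigl((A\otimes\1_{\Lambda\setminus\Lambda'})\restriction_{\caC_\Lambda}\bigr)=\bigcup_{R'\in\widetilde{\cR}}\spec\bigl(A\restriction_{\caC_{\Lambda'}(R')}\bigr),
\]
where $\widetilde{\cR}\subseteq\cR_{\Lambda'}$ collects those $R'$ which admit at least one extension to a (periodic) VMD tiling on $\Lambda$. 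Together with $\caC_{\Lambda'}=\bigoplus_{R'\in\cR_{\Lambda'}}\caC_{\Lambda'}(R')$ from Theorem~\ref{thm:BVMD}, the claims thus reduce to verifying $\widetilde{\cR}=\cR_{\Lambda'}$.

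The main (essentially bookkeeping) obstacle is therefore producing an extension for every $R'\in\cR_{\Lambda'}$. For open boundary conditions, padding with $\mu^l=(0,\ldots,0)$ and $\mu^r=(0,\ldots,0)$ will do: in the open-boundary form of Lemma~\ref{lem:configs}, Conditions 1 and 2 are triggered only at occupied sites, so zero-padding cannot introduce new violations. For periodic boundary conditions, the boundary tiles of $R'$ instead need to be ``healed'' into bulk tiles on the ring---a trailing $B_r=(011)$ requires three zeros appended (to become a bulk dimer), trailing $M_2$ or $M_1$ requires one or two zeros respectively, and a leading $B_l=(11000)$ requires one zero prepended. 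The worst combined deficit is four sites, exactly matched by the hypothesis $|\Lambda|\geq|\Lambda'|+4$.
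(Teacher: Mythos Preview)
Your proposal is correct and follows essentially the same route as the paper: both arguments rest on the block decomposition of Lemma~\ref{lem:decomp}, observe that $A\otimes\1_{\Lambda\setminus\Lambda'}$ is block diagonal with respect to it, and then use that every $\caC_{\Lambda'}(R')$ actually occurs as a block. Your treatment is in fact more explicit than the paper's on the last point---the paper simply asserts (in the paragraph preceding the corollary) that $|\Lambda|\geq|\Lambda'|+4$ ``guarantees that all BVMD tilings on $\Lambda'$ can extend to a periodic VMD tiling on $\Lambda$,'' whereas you carry out the healing-by-zeros bookkeeping that justifies it.
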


\begin{proof}
	Since $A$ leaves $\caC_{\Lambda'}(R')$ invariant for each $R'\in\cR_{\Lambda'}$, Lemma~\ref{lem:decomp} guarantees $\caC_{\Lambda}^\#$ is invariant under $A\otimes\1_{\Lambda\setminus\Lambda'}$, and so
	\[
	A\otimes\1_{\Lambda\setminus\Lambda'}\restriction_{\caC_\Lambda^\#} = P_{\caC_\Lambda^\#}(A\otimes\1_{\Lambda\setminus\Lambda'})P_{\caC_\Lambda^\#}
	\]
	where $P_{\caC_\Lambda^\#}$ is the orthogonal projection onto $\caC_{\Lambda^\#}$. Moreover, as the BVMD-subspaces $\{\caC_{\Lambda'}(R'): R'\in\cR_{\Lambda'}\}$ form family of orthogonal subspaces, $A\otimes\1_{\Lambda\setminus\Lambda'}$ is block diagonal with respect to the decomposition from Lemma~\ref{lem:decomp}. Recalling that $\caC_{\Lambda'}=\bigoplus_{R'}\caC_{\Lambda'}(R')$ and each $\caC_{\Lambda'}(R')$ is represented on the RHS of \eqref{direct_sum2}, the second property is immediate. Additionally recognizing that
	\[
	\|A\otimes\1_{\Lambda\setminus\Lambda'}\|_{\caC_\Lambda^\#} = \|P_{\caC_\Lambda^\#}(A\otimes\1_{\Lambda\setminus\Lambda'})P_{\caC_\Lambda^\#}\|_{\cH_\Lambda}, \qquad \|A\|_{\caC_{\Lambda'}} = \|P_{\caC_{\Lambda'}}AP_{\caC_{\Lambda'}}\|_{\cH_\Lambda}
	\]
	the first property again follows from the block diagonalization \eqref{direct_sum2}.
 	
%
\end{proof}

We end this section by providing an orthogonal basis for the intersection 
\[
\caC_\Lambda\cap(\caG_{\Lambda'}\otimes \cH_{\Lambda\setminus\Lambda'}).
\]
The situation of interest for the gap estimate produced in Section~\ref{sec:MM} is when $\Lambda'$ and $\Lambda$ only differ by the last three sites, i.e. 
\[\Lambda':=[1,L-3]\subseteq [1,L] = :\Lambda.\] 
Using frustration-freeness, one immediately has
\be\label{ff_gs_cond}
\psi_\Lambda(R) \in \caG_{\Lambda'}\otimes \cH_{\Lambda \setminus\Lambda'}
\ee
 for all $ R\in\cR_\Lambda $, but this set is incomplete. Knowing how a single $\caC_\Lambda(R)$ decomposes in terms of the BVMD spaces $\caC_{\Lambda'}(R')$ will help identify how to extend this to a basis. For $\Lambda'\subseteq\Lambda$ as above, this decomposition is completely characterized by whether the replacement rules apply to the last two tiles of an arbitrary root tiling $R=(R_1, \ldots, R_k)\in\cR_\Lambda$, and so we set
\be\label{RMM}
\cR_\Lambda^{MM}= \{R\in\cR_\Lambda \, | \, R \text{ ends in two or more monomers}\}.
\ee

For any $R\in \cR_{\Lambda}\setminus\cR_{\Lambda}^{MM}$, the particle content of the last three sites of any $T\leftrightarrow R$ is left invariant as the replacement rules do not apply. As a consequence, for all such $R$,
\be\label{intersection1}
\caC_\Lambda(R) = \caC_{\Lambda'}(R')\otimes \ket{\sigma_{\Lambda}(R)\restriction_{\Lambda \setminus\Lambda'}}, 
\ee
where $R' = R\restriction_{\Lambda'}$ is such that \eqref{restriction} holds, see Figure~\ref{fig:truncations_MM}. 
\begin{figure}
	\begin{center}
		\includegraphics[scale=.35]{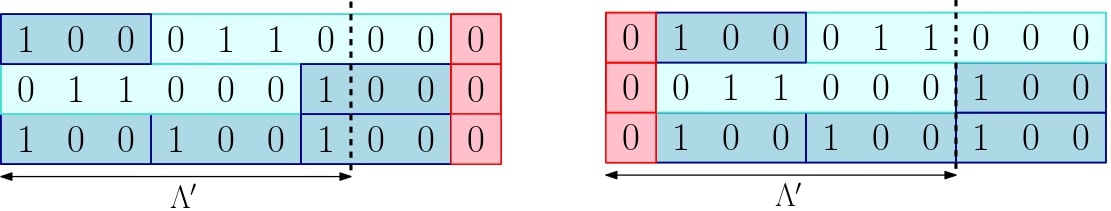}
	\end{center}
\caption{An example of the decomposition of $\caC_\Lambda(R)$ for a root $R\in\cR_\Lambda\setminus\cR_\Lambda^{MM}$ and $R\in\cR_\Lambda^{MM}$, respectively.}
\label{fig:truncations_MM}
\end{figure}

On the other hand, if $R=(R_1, \ldots, R_{k-2}, M, M_{i})\in\cR_{\Lambda}^{MM}$ for some $i\in\{1,2,3\}$, the set of tiles $T\leftrightarrow R$ can be partitioned into two sets: those ending in $M_{i}$, and those ending in $D_i$. Setting $R_D = (R_1, \ldots, R_{k-2},D_i)$, this produces
\be\label{intersection2}
\caC_\Lambda(R) = \left(\caC_{\Lambda'}(R')\otimes \ket{\sigma_{\Lambda}(R)\restriction_{\Lambda \setminus\Lambda'}}\right)\oplus \left(\caC_{\Lambda'}(R_D')\otimes \ket{\sigma_{\Lambda}(R_D)\restriction_{\Lambda \setminus\Lambda'}} \right)
\ee
where $R'=R\restriction_{\Lambda'}$ and $R_D' = R_D\restriction_{\Lambda'}$, see Figure~\ref{fig:truncations_MM}. 

 Recalling that each BVMD space $\caC_{\Lambda'}(R')$ supports a unique ground state by Theorem~\ref{thm:BVMD}, it is immediately clear from \eqref{intersection1}-\eqref{intersection2} that
\be\label{gs_dim}
\dim \left(\caC_\Lambda(R)\cap(\caG_{\Lambda'}\otimes \cH_{\Lambda \setminus\Lambda'})\right) = \begin{cases}
	1 & R\in \cR_{\Lambda}\setminus\cR_{\Lambda}^{MM} \\ 2 & R\in \cR_{\Lambda}^{MM}\\
\end{cases} \, .
\ee
Hence, given the direct sum decomposition of $\caC_\Lambda$ and the orthogonality properties from Theorem~\ref{thm:BVMD}, to extend the BVMD states $\{\psi_\Lambda(R) : R\in\cR_\Lambda\}$ to an orthogonal basis of $\caC_\Lambda\cap \caG_{\Lambda'}\otimes \cH_{\Lambda\setminus\Lambda'}$, one only needs to find a single vector
\be\label{gs_requirement}
\eta_\Lambda(R)\in\caC_{\Lambda}(R)\cap(\caG_{\Lambda'}\otimes \cH_{\Lambda \setminus\Lambda'}) \quad \text{such that}\quad \braket{\eta_\Lambda(R)}{\psi_\Lambda(R)} =0
\ee
for each $R\in \cR_{\Lambda}^{MM} $. For any such root tiling there is a unique $n\geq 2$ and $i\in \{1,2,3\}$ so that the ordered tiling of $R$ can be partitioned into two sub-roots $R=(\tilde{R},M_n^{(i)})$ where $\tilde{R}$ does not end in a monomer. (In the case that $R=M_n^{(i)}$ we use the convention that $\tilde{R} = \emptyset$.) The next result shows that
\be\label{eta}
\eta_\Lambda(R):=\psi_{\Lambda(n,i)}(\tilde{R})\otimes\eta_n^{(i)}
\ee
satisfies both requirements from \eqref{gs_requirement} where 
\be\label{eta_n}
\eta_n^{(i)}:= -\overline{\lambda}\alpha_{n-1}\vp_{n-1}\otimes\ket{M_i}+\vp_{n-2}\otimes\ket{D_i}
\in\caC_{[1,3(n-1)+i]}(M_n^{(i)})
\ee
and $\Lambda(n,i)\subseteq\Lambda$ is the subinterval covered by $\tilde{R}$.

\begin{lem}\label{lem:orth_basis}
	Suppose that $\Lambda'=[1,L-3]\subseteq [1,L]=\Lambda$ for some $L\geq 4$. Then
	\be
	\{\psi_\Lambda(R) : R\in \cR_\Lambda\} \cup \{ \eta_\Lambda(R) : R\in\cR_{\Lambda}^{MM}\}
	\ee
	is an orthogonal basis for $\caC_\Lambda\cap \caG_{\Lambda'}\otimes \cH_{\Lambda \setminus\Lambda'}$.
\end{lem}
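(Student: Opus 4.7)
The strategy has three parts: verify orthogonality within the proposed list, verify that each new vector $\eta_\Lambda(R)$ lies in the correct subspace, and check that the resulting count matches $\dim\bigl(\caC_\Lambda \cap (\caG_{\Lambda'} \otimes \cH_{\Lambda\setminus\Lambda'})\bigr)$ given by \eqref{gs_dim}. Orthogonality between vectors associated with distinct roots is immediate from Theorem~\ref{thm:BVMD}(1), since each of $\psi_\Lambda(R)$ and $\eta_\Lambda(R)$ sits in its own BVMD-subspace $\caC_\Lambda(R)$, and these are mutually orthogonal. Within a single $\caC_\Lambda(R)$ for $R = (\tilde R, M_n^{(i)}) \in \cR_\Lambda^{MM}$, the fragmentation~\eqref{fragmentation} gives $\psi_\Lambda(R) = \psi_{\Lambda(n,i)}(\tilde R) \otimes \vp_n^{(i)}$, and the product form~\eqref{eta} of $\eta_\Lambda(R)$ reduces $\psi_\Lambda(R) \perp \eta_\Lambda(R)$ to the single identity $\braket{\vp_n^{(i)}}{\eta_n^{(i)}} = 0$. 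Expanding $\vp_n^{(i)}$ via the recursion~\eqref{recursion2} and using $\ket{M_i} \perp \ket{D_i}$, this collapses to a two-term expression that vanishes precisely when the scalar in front of $\vp_{n-1} \otimes \ket{M_i}$ equals $\beta_{n-1} = \|\vp_{n-2}\|^2/\|\vp_{n-1}\|^2$, as supplied by \eqref{beta_n}.

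Next I would show $\eta_\Lambda(R) \in \caC_\Lambda(R)$. The product form~\eqref{eta} together with $\psi_{\Lambda(n,i)}(\tilde R) \in \caC_{\Lambda(n,i)}(\tilde R)$ reduces the question to $\eta_n^{(i)} \in \caC_{[1,3(n-1)+i]}(M_n^{(i)})$. This is immediate from the decomposition \eqref{intersection2} applied to $M_n^{(i)}$: the first summand $\vp_{n-1} \otimes \ket{M_i}$ lies in the piece of $\caC(M_n^{(i)})$ whose tilings end in $M_i$, while the second summand $\vp_{n-2} \otimes \ket{D_i}$ lies in the complementary piece, accessed from the first by the replacement $(M, M_i) \leftrightarrow D_i$.

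The main task is to show $\eta_\Lambda(R) \in \caG_{\Lambda'} \otimes \cH_{\Lambda\setminus\Lambda'}$. The plan is to invoke Lemma~\ref{lem:decomp} with the subinterval taken to be $\Lambda'$, which identifies $\caC_\Lambda \cap (\caG_{\Lambda'} \otimes \cH_{\Lambda\setminus\Lambda'})$ with $\bigoplus_{R',\mu^r} \spa\{\psi_{\Lambda'}(R')\} \otimes \ket{\mu^r}$. Hence it suffices to rewrite $\eta_\Lambda(R)$ as a linear combination of such rank-one products, and by the product form it is enough to do this for $\eta_n^{(i)}$ alone. Applying the recursion~\eqref{recursion2} to unfold $\vp_{n-1}$ (and, where needed, $\vp_{n-2}$) across the three rightmost sites of $\Lambda'$, and then grouping the resulting terms by the configuration they induce on $\Lambda\setminus\Lambda'$, one should find that each group's coefficient is precisely a BVMD ground state $\psi_{\Lambda'}(R')$ for some root $R'\in\cR_{\Lambda'}$ ending in an appropriate right-boundary tile ($B_r$, $M_j$, or a monomer followed by voids). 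The cases $i=1,2,3$ are handled in parallel.

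\textbf{Anticipated main obstacle.} The most delicate step is the third one: verifying that, after the recursive unfolding, the $\Lambda'$-coefficients of $\eta_n^{(i)}$ collapse \emph{exactly} into BVMD ground states, rather than into some strictly larger subspace of $\caC_{\Lambda'}$. It is the specific choice of scalar $\beta_{n-1}$ (forced by the orthogonality in the first step) together with the norm-ratio relations underlying \eqref{beta_n} that yield this clean collapse; the subcases $i=1,2$, in which the boundary tile $M_i$ straddles $\Lambda'$ and $\Lambda\setminus\Lambda'$, will require the most careful bookkeeping. Once this is in hand, the orthogonal list has cardinality $|\cR_\Lambda| + |\cR_\Lambda^{MM}|$, which by \eqref{gs_dim} equals $\dim\bigl(\caC_\Lambda \cap (\caG_{\Lambda'}\otimes\cH_{\Lambda\setminus\Lambda'})\bigr)$, so the list is an orthogonal basis.
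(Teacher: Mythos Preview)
Your proposal is correct and follows essentially the paper's strategy: reduce via the dimension count \eqref{gs_dim} to checking \eqref{gs_requirement}, verify $\braket{\vp_n^{(i)}}{\eta_n^{(i)}}=0$ using the recursion \eqref{recursion2} and lift via the fragmentation \eqref{fragmentation}, and establish ground-state membership by exhibiting $\eta_\Lambda(R)$ as a combination of products $\psi_{\Lambda'}(R')\otimes\ket{\mu^r}$.

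One correction to your anticipated obstacle: the scalar $\beta_{n-1}$ is irrelevant to the membership step. The two summands of \eqref{eta_n} land on \emph{distinct} configurations on $\Lambda\setminus\Lambda'$ (e.g.\ $\ket{100}$ versus $\ket{000}$ for $i=3$), so each must independently carry a BVMD ground state as its $\Lambda'$-factor---no recombination occurs, and any scalar would do here. The tool you want is not the recursion \eqref{recursion2} applied to $\vp_{n-1}$, but property~1 preceding \eqref{recursion1}, namely $\vp_{n-1}=\vp_{n-1}^{(i)}\otimes\ket{0}^{\otimes(3-i)}$, which directly strips the last three sites for $i<3$. This is exactly what the paper does, obtaining $\psi_{\Lambda'}(R_1)=\psi_{\Lambda(n,i)}(\tilde R)\otimes\vp_{n-1}^{(i)}$ and $\psi_{\Lambda'}(R_2)=\psi_{\Lambda(n,i)}(\tilde R)\otimes\vp_{n-2}\otimes\ket{\tilde D_i}$, where $\ket{\tilde D_i}$ is the first $i$ sites of $\ket{D_i}$. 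The computation is thus simpler than you anticipate.
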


\begin{proof}
Given \eqref{gs_dim} and since 
\[\caC_\Lambda\cap \caG_{\Lambda'}\otimes \cH_{\Lambda \setminus\Lambda'} = \bigoplus_{R\in\cR_\Lambda}\left(\caC_\Lambda(R)\cap \caG_{\Lambda'}\otimes \cH_{\Lambda \setminus\Lambda'}\right),\]
we only need to verify that $\eta_\Lambda(R)$ defined as in \eqref{eta} satisfies \eqref{gs_requirement}.  
Recalling the recursion relation from \eqref{recursion2}, it is straightforward to calculate $\braket{\vp_n^{(i)}}{\eta_n^{(i)}}=0$ for all $n\geq 2$ and $i\in\{1,2,3\}$. By the fragmentation property \eqref{fragmentation} and \eqref{eta},
\[\braket{\eta_\Lambda(R)}{\psi_\Lambda(R)} = \|\psi_{\Lambda(n,i)}(\tilde{R})\|^2\braket{\vp_n^{(i)}}{\eta_n^{(i)}}, \]  
which implies the orthogonality condition~\eqref{gs_requirement}. 

Using the three properties of the squeezed Tao-Thouless state surrounding \eqref{recursion1}-\eqref{beta_n}, the first term on the RHS of \eqref{eta_n} can be rewritten in order to conclude that there are roots $R_1, R_2\in\cR_{\Lambda'}$ so that
\[
\psi_{\Lambda'}(R_1)=\psi_{\Lambda(n,i)}(\tilde{R})\otimes\vp_{n-1}^{(i)}, \quad \psi_{\Lambda'}(R_2)=\psi_{\Lambda(n,i)}(\tilde{R})\otimes\vp_{n-2}\otimes \ket{\tilde{D}_i}
\]
where $\ket{\tilde{D}_1}=\ket{0}, \, \ket{\tilde{D}_2}=\ket{01},$ and $\ket{\tilde{D}_3}=\ket{011}$. This proves that 
$\eta_\Lambda(R)\in \cG_{\Lambda'}\otimes\cH_{\Lambda \setminus\Lambda'}$.
\end{proof}

\section{Uniform bulk gap via invariant subspaces}\label{sec:proof}

In this section, we adapt the novel method from \cite{WY:2021} of establishing bulk spectral gaps in the presence of edge states via invariant subspaces to the present case.  Let us briefly review the general approach.

As we saw in Section~\ref{sec:Tiling_Spaces}, the periodic Hamiltonian $H_\Lambda^\per$ is block diagonal with respect to decomposition $\cH_\Lambda = \caC_\Lambda^\per \oplus (\caC_{\Lambda}^\per)^\perp$. The gap above its ground state space $\caG_\Lambda^\per \subseteq  \caC_\Lambda^{\per} $  therefore is 
\[
\gap (H_\Lambda^\per) = \min\{ E_1(\caC_{\Lambda}^\per), \, E_0((\caC_{\Lambda}^\per)^\perp)\}
\]
where
\[
 E_1(\caC_{\Lambda}^\per) = \inf_{\substack{\psi \in \caC_\Lambda^\per \cap (\caG_\Lambda^\per)^\perp \\ \psi\neq 0}}\frac{\braket{\psi}{H_\Lambda^\per \psi}}{\|\psi\|^2}, \qquad 
  E_0((\caC_{\Lambda}^\per)^\perp) = \inf_{0\neq\xi \in (\caC_{\Lambda}^\per)^\perp }\frac{\braket{\xi}{H_\Lambda^\per \xi}}{\|\xi\|^2}.
\]
For our proof of Theorem~\ref{thm:bulk_gap}., we establish separate bounds on $E_1(\caC_{\Lambda}^\per)$ and $E_0((\caC_{\Lambda}^\per)^\perp)$ that are (1) uniform in the volume and (2) robustly positive in the limit $\lambda \to 0$. The estimate on $E_0((\caC_{\Lambda}^\per)^\perp)$ in Section~\ref{sec:electrostatics} is model specific and  utilizes the characterization of tiling-state configurations in Lemma~\ref{lem:configs} to produce electrostatic estimates. The bound on $E_1(\caC_{\Lambda}^\per)$ produced in Section~\ref{sec:FSC} relies on a version of Knabe's finite size criteria from \cite{knabe:1988}. In turn, this estimate depends on a uniform lower bound on
\be\label{gap_obc}
\gap(H_{\Lambda'} \restriction_{\caC_\Lambda^\per}) = \inf_{0\neq\psi \in \caC_\Lambda^\per\cap \caG_{\Lambda'}^\perp}\frac{\braket{\psi}{H_{\Lambda'}\psi}}{\|\psi\|^2}
\ee
for $\Lambda'\subseteq \Lambda$ sufficiently large, which is proved in Section~\ref{sec:MM} using the martingale method \cite{nachtergaele:1996, nachtergaele:2016b}. Since~\eqref{gap_obc} involves the Hamiltonian $ H_{\Lambda'} $ with open boundary, one could worry that edge states, whose energies tend to zero as $ \lambda \to 0 $  and which hinder the estimates of the bulk gap in \cite{NWY:2021},  could again destroy the required robust estimates on $E_1(\caC_\Lambda^\per)$. The reason this does not occur is that all of the low-lying edge states of $H_{\Lambda'}$ belong to $(\caC_{\Lambda}^\per)^\perp$ and hence do not enter~\eqref{gap_obc}. Cutting off the edge states before applying martingale or finite-volume analysis is the core idea of the new method from~ \cite{WY:2021}.

\subsection{Martingale method}\label{sec:MM}
As a consequence of  the isospectral and embedding properties in Corollary~\ref{cor:isospectral}, a lower bound on the gap in~\eqref{gap_obc} is immediately obtained from a lower bound on
\be\label{eq:OBC_uniform_gap}
E_1(\caC_\Lambda):= \sup_{0\neq\psi\in  \caC_\Lambda\cap \cG_{\Lambda}^\perp}\frac{\braket{\psi}{H_\Lambda\psi}}{\|\psi\|^2}
\ee
for any sufficiently large interval $\Lambda = [1,L]$. This will be established via the martingale method from~\cite{nachtergaele:2016b}. The idea behind this approach is to use a sequence of operators to effectively trap a single excitation to a finite region independent of $\Lambda$. For the truncated $\nu=1/3$ model, the method will approximately localize an excitation to an interval with 8 to 10 sites, which is captured by the norm bound \eqref{confined} below.

To set up the method, write $L=3N+k$ for some $N\geq 2$ and $k\in\{1,2,3\}$, and define two finite sequences of Hamiltonians $h_n,H_n\in \cB(\cH_\Lambda)$
\be\label{h_n}
h_n = H_{\Lambda_n}, \quad H_n =\sum_{m=2}^n h_m, \quad \Lambda_n = \begin{cases} [1,6+k] & n = 2 \\ [3n+k-8,3n+k], & 2<n\leq N
	\end{cases}
\ee
for $2 \leq n \leq N$. Furthermore, denote by $g_n$ and $G_n$ the orthogonal projections onto the corresponding ground state spaces $\ker(h_n)=\cG_{\Lambda_n}\otimes\cH_{\Lambda\setminus\Lambda_n}$ and $\ker(H_n)$, respectively. Every interaction term ($n_{x}n_{x+2}$ or $q_x^*q_x$) is supported on at least one and at most three of the intervals $\Lambda_n$. As each interaction is non-negative, for all $  2\leq n \leq N $, 
\be\label{equiv_Hams}
H_{[1,3n+k]} \leq H_{n} \leq 3H_{[1,3n+k]} , 
\ee
and, in particular, the ground state spaces agree, i.e. $\ker(H_n)=\cG_{[1,3n+k]}\otimes\cH_{\Lambda\setminus [1,3n+k]}$. Finally, define a resolution of the identity defined in terms the ground state projections:
\begin{equation}
	\label{E_n}
	E_n := \begin{cases}
		\1-G_2, & n=1 \\
		G_n-G_{n+1}, & 2\leq n \leq N-1 \\
		G_N, & n=N .
	\end{cases}
\end{equation}

We consider the restrictions of these operators to $\caC_\Lambda$. As discussed in Section~\ref{sec:isospectral},  both $H_n$ and $h_n$ (and, thus, their corresponding ground-state projections) leave $\caC_\Lambda$ invariant. To simplify notation, denote by $\cA^\cV$ the restriction of an operator $A\in\cB(\cH_\Lambda)$ to an invariant subspace $\cV\subseteq \cH_\Lambda$, i.e.
\be\label{invariance}
A^\cV : = A\restriction_{\cV} = P_{\cV} A P_{\cV}
\ee 
where $P_{\cV}$ is the orthogonal projection onto $\cV$. In particular, invariance implies $A=A^{\cV} + A^{\cV^\perp}$.

\begin{theorem}\label{thm:MM}
	Fix $\Lambda = [1,L]$ with $L\geq 10$. The restrictions of the operators $h_n$, $H_n$, $g_n$ and $E_n$ to $\caC_\Lambda$, defined as in \eqref{h_n}-\eqref{invariance}, satisfy the following three properties for all $2\leq n \leq N$:
	\begin{enumerate}
		\item $h_n^{\caC_\Lambda}\geq \kappa(\1-g_n)^{\caC_\Lambda}$
		\item $[g_n^{\caC_{\Lambda}}, E_m^{\caC_{\Lambda}}] \neq 0$ only if $m\in[n-3,n-1]$.
		\item For$|\lambda|\neq 0$, the ground state projections satisfies
		\be\label{confined}
		\|g_{n}^{\caC_{\Lambda}}E_{n-1}^{\caC_{\Lambda}}\|^2 \leq f(|\lambda|^2):=\sup_{k\geq 4} f_k(|\lambda|^2)
		\ee
		where, given $\beta_k$ from \eqref{beta_n},
		\begin{equation}\label{f_k}
			f_k(r) = r\beta_k\beta_{k-2}\left(
			\frac{[1-\beta_{k-1}(1+r)]^2}{1+2r}
			+\beta_{k-3}\frac{r(1-\beta_{k-1})^2}{1+r}
			\right).
		\end{equation}
	\end{enumerate} 
As a consequence, if $|\lambda|>0$ and $f(|\lambda|^2)<1/3$, the spectral gap of $H_\Lambda\restriction\caC_\Lambda$ is bounded from below by
\be
E_1(\caC_\Lambda) \geq \frac{\kappa}{3}\left(1-\sqrt{3f(|\lambda|^2)}\right)^2.
\ee
\end{theorem}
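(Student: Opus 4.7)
The three properties claimed are precisely the hypotheses required by the martingale method~\cite{nachtergaele:2016b,nachtergaele:1996}: a uniform local gap, locality of the commutators $[g_n, E_m]$, and a confinement bound on $\|g_n E_{n-1}\|^2$. Combined with the overlap estimate $H_\Lambda \leq \sum_n h_n \leq 3 H_\Lambda$ implied by~\eqref{equiv_Hams}, the abstract martingale inequality produces the stated bound $\frac{\kappa}{3}\bigl(1-\sqrt{3 f(|\lambda|^2)}\bigr)^2$, with the factor $1/3$ coming from the overlap and the $\sqrt{3}$ from the three values of $m$ permitted by Property~2. My plan is therefore to verify each of the three properties on the invariant subspace $\caC_\Lambda$ and then invoke the abstract machinery.

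\textbf{Properties~1 and~2.} For Property~1, by the isospectral embedding of Corollary~\ref{cor:isospectral} it suffices to prove $H_{\Lambda_n}\restriction_{\caC_{\Lambda_n}(R)} \geq \kappa(1 - P_{\psi_{\Lambda_n}(R)})$ on each BVMD block of $\caC_{\Lambda_n}$. Since $|\Lambda_n| \leq 10$, every such block is finite-dimensional and carries a unique ground state by Theorem~\ref{thm:BVMD}. The elementary building block is the action of a single $q_x^* q_x$ on the two-dimensional subspace $\spa\{\ket{1001}, \ket{0110}\}$, whose only non-zero eigenvalue is $1 + |\lambda|^2 \geq 1$; the uniform lower bound $\kappa$ then follows block by block via direct diagonalization, essentially as in~\cite{NWY:2021}. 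For Property~2, the argument is purely a matter of supports: $\Lambda_n = [3n+k-8, 3n+k]$ is disjoint from $[1, 3(m+1)+k]$ whenever $m \leq n - 4$, so $g_n$ commutes with both $G_m$ and $G_{m+1}$ and hence with $E_m$; conversely, if $m \geq n$ then frustration-freeness together with $h_n \leq H_m$ gives $G_m, G_{m+1} \leq g_n$, so $g_n E_m = E_m$ is self-adjoint and therefore commutes with $g_n$. This leaves only $m \in \{n-3, n-2, n-1\}$.

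\textbf{Property~3: the confinement bound.} This is the technical heart of the proof. Setting $\Lambda' := [1, 3n+k-3]$, the range of $G_{n-1}$ inside $\caC_\Lambda$ is $\caC_\Lambda \cap (\cG_{\Lambda'} \otimes \cH_{\Lambda \setminus \Lambda'})$. Applying Lemma~\ref{lem:orth_basis} to the window $[1, 3n+k]$ and extending trivially by identities on the complement produces an orthogonal basis of this subspace: the BVMD states $\psi_\Lambda(R)$, which lie in $\ker H_n$ and are therefore annihilated by $E_{n-1} = G_{n-1}-G_n$, together with the supplementary states $\eta_\Lambda(R)$ for $R \in \cR^{MM}$, which then orthogonally span $\ran E_{n-1}$. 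By the fragmentation formula~\eqref{eta}--\eqref{eta_n}, each $\eta_\Lambda(R)$ factorizes as $\psi_{\Lambda(n',i)}(\tilde R) \otimes \eta_{n'}^{(i)}$ (extended by the identity outside the window), so after applying $g_n$ the ratio $\|g_n \eta_\Lambda(R)\|^2 / \|\eta_\Lambda(R)\|^2$ reduces to a matrix element in the squeezed Tao-Thouless sector on at most ten sites. Using the recursions~\eqref{recursion1}--\eqref{recursion2} and the explicit form of $\beta_n$ from~\eqref{beta_n}, this matrix element can be evaluated in closed form; taking the supremum over $n' \geq 4$ produces exactly $f(|\lambda|^2) = \sup_k f_k(|\lambda|^2)$ with $f_k$ as in~\eqref{f_k}. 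The principal obstacle will be the case analysis: one must separately treat the three possible right-tails $M_1, M_2, M_3$ of $R$ (the three values of $i$ in $\eta_{n'}^{(i)}$) and verify that~\eqref{f_k} is uniformly the worst case across these. Once Property~3 is established, the concluding spectral-gap bound on $E_1(\caC_\Lambda)$ follows from the abstract martingale inequality outlined above.
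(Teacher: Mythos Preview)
Your proposal is correct and follows essentially the same route as the paper: Properties~1 and~2 are handled exactly as you describe, and Property~3 is established by localizing to the window $[1,3n+k]$ via Corollary~\ref{cor:isospectral}, invoking Lemma~\ref{lem:orth_basis} to identify $\ran E_{n-1}^{\caC_\Lambda}$ with the span of the $\eta$-states, and then computing $\|g_n\eta_\Lambda(R)\|^2/\|\eta_\Lambda(R)\|^2$ explicitly (this is the content of Lemma~\ref{lem:epsilon}). One simplification you will discover when carrying out the computation: the anticipated case analysis over the three right-tails $M_1,M_2,M_3$ is unnecessary, since the explicit evaluation in~\eqref{G2_action}--\eqref{G_eta} gives the \emph{same} value $f_{n'}(|\lambda|^2)$ for each $i\in\{1,2,3\}$; also note that the contributions from $n'=2,3$ vanish identically (see~\eqref{small_n}), which is why the supremum in~\eqref{confined} runs only over $k\geq 4$.
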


The main challenge to prove of Theorem~\ref{thm:MM}, which is left to the end of this subsection, is to establish the third property. 

\begin{lem}\label{lem:epsilon}
	Fix $|\lambda|>0$ and $L\geq 10$, and define $\Lambda = [1,L]$, $\Lambda_1 = [1,L-3]$ and $\Lambda_2 = [L-8,L]$. Then
	\be\label{epsilon_calc}
	\|G_{\Lambda_2}(\1-G_\Lambda)G_{\Lambda_1}\|_{\caC_{\Lambda}}^2 \leq f(|\lambda|^2)
	\ee
	where $G_{\Lambda'}$ is the orthogonal projection onto the ground state space $\caG_{\Lambda'}\otimes\cH_{\Lambda\setminus\Lambda'}$.
\end{lem}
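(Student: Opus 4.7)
The plan is to leverage the direct-sum decomposition $\caC_\Lambda=\bigoplus_{R\in\cR_\Lambda}\caC_\Lambda(R)$ to reduce the norm to a per-root computation, and then to reduce each per-root norm to a single scalar ratio of explicit vectors. By Corollary~\ref{cor:isospectral} each of the projections $G_{\Lambda_1}$, $G_{\Lambda_2}$, $G_\Lambda$ leaves every $\caC_\Lambda(R)$ invariant, so the operator is block-diagonal along this direct sum and
\[
\|G_{\Lambda_2}(\1-G_\Lambda)G_{\Lambda_1}\|_{\caC_\Lambda}^2 = \max_{R\in\cR_\Lambda}\|G_{\Lambda_2}(\1-G_\Lambda)G_{\Lambda_1}\|_{\caC_\Lambda(R)}^2.
\]

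Lemma~\ref{lem:orth_basis} classifies the image of $G_{\Lambda_1}$ on $\caC_\Lambda(R)$: it is spanned by $\psi_\Lambda(R)$ alone when $R\notin\cR_\Lambda^{MM}$, and by the orthogonal pair $\{\psi_\Lambda(R),\eta_\Lambda(R)\}$ when $R\in\cR_\Lambda^{MM}$. Since $\psi_\Lambda(R)\in\caG_\Lambda$ is annihilated by $\1-G_\Lambda$, the first case contributes zero. In the second case, $\eta_\Lambda(R)$ lies in $\caC_\Lambda(R)$, is orthogonal to $\psi_\Lambda(R)$ by construction, and is orthogonal to every $\psi_\Lambda(R')$ with $R'\neq R$ because $\caC_\Lambda(R)\perp\caC_\Lambda(R')$ by Theorem~\ref{thm:BVMD}. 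Since $\{\psi_\Lambda(R'):R'\in\cR_\Lambda\}$ is a basis of $\caG_\Lambda$, this yields $G_\Lambda\eta_\Lambda(R)=0$, so $(\1-G_\Lambda)\eta_\Lambda(R)=\eta_\Lambda(R)$ and the block norm collapses to
\[
\|G_{\Lambda_2}(\1-G_\Lambda)G_{\Lambda_1}\|_{\caC_\Lambda(R)}^2 = \frac{\|G_{\Lambda_2}\eta_\Lambda(R)\|^2}{\|\eta_\Lambda(R)\|^2}.
\]

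It then suffices to estimate this ratio for every $R=(\tilde R,M_n^{(i)})\in\cR_\Lambda^{MM}$, where by~\eqref{eta} we have $\eta_\Lambda(R)=\psi_{\Lambda(n,i)}(\tilde R)\otimes\eta_n^{(i)}$. I would apply Lemma~\ref{lem:decomp} with $\Lambda'=\Lambda_2$, decomposing both $\eta_\Lambda(R)$ and the target ground-state space $\caG_{\Lambda_2}\otimes\cH_{\Lambda\setminus\Lambda_2}$ along the root tilings $R'\in\cR_{\Lambda_2}$, and then compute the inner products with the $\ket{\mu^l}\otimes\psi_{\Lambda_2}(R')$. For $n$ large enough that $\eta_n^{(i)}$ extends past the left boundary of $\Lambda_2$, I would split the factors $\vp_{n-1}$ and $\vp_{n-2}$ appearing in the defining formula~\eqref{eta_n} into pieces supported outside and inside $\Lambda_2$ via the recursion~\eqref{recursion1}, and use the orthogonality $\braket{\vp_k^{(i)}}{\eta_k^{(i)}}=0$ to isolate the non-ground-state content on $\Lambda_2$. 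After normalizing by $\|\eta_\Lambda(R)\|^2$ and simplifying via the recursion~\eqref{recursion2} and the ratios $\beta_k$ in~\eqref{beta_n}, the quotient should reassemble into the closed form~\eqref{f_k}, with the integer $k\geq 4$ recording the number of monomer tiles of $R$ that the splitting exposes near the right boundary of $\Lambda_2$. Taking the supremum over admissible $R$ then gives~\eqref{epsilon_calc}.

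The main obstacle is this final explicit calculation: tracking the decomposition of $\eta_n^{(i)}$ under the $\Lambda_2$-block structure, handling the case distinction over $i\in\{1,2,3\}$ and over whether $\eta_n^{(i)}$ fits inside $\Lambda_2$ or straddles its left boundary, and collapsing the resulting multi-term expression into the compact two-term form in~\eqref{f_k}. The restriction $k\geq 4$ presumably reflects that four levels of the $\beta$-recursion are needed so that all four ratios $\beta_k,\beta_{k-1},\beta_{k-2},\beta_{k-3}$ appearing in~\eqref{f_k} are well-defined for every admissible tail $M_n^{(i)}$ with $n\geq 2$.
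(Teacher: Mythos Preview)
Your reduction is correct and matches the paper's proof: after block-diagonalizing over $\caC_\Lambda(R)$ and collapsing to $\|G_{\Lambda_2}\eta_\Lambda(R)\|^2/\|\eta_\Lambda(R)\|^2$, the paper shows this ratio vanishes when $\eta_n^{(i)}$ fits inside $\Lambda_2$ (i.e.\ $n=2,3$, since then $\eta_n^{(i)}\perp\vp_n^{(i)}$ kills the unique $\Lambda_2$-ground state in that block) and equals exactly $f_n(|\lambda|^2)$ when $n\geq 4$, obtained by projecting $\eta_n^{(i)}$ onto the only two BVMD states $\psi_{\Lambda_2}(R_1^i),\psi_{\Lambda_2}(R_2^i)$ of $\Lambda_2$ with nonzero overlap. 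So the index in $f_k$ is simply $k=n$, and the threshold $k\geq 4$ is precisely your ``straddling'' case rather than a well-definedness constraint on the $\beta$-ratios.
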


\begin{proof}
The frustration-free property and Theorem~\ref{thm:BVMD} guarantee that $\caG_{\Lambda}\subseteq \cD_{\Lambda_1}^{\Lambda}:=\caG_{\Lambda_1}\otimes\cH_{\Lambda\setminus\Lambda_1}\cap\caC_{\Lambda}$. As a consequence, the norm can be expressed as
\be\label{rewrite}
\|G_{\Lambda_2}(\1-G_\Lambda)G_{\Lambda_1}\|_{\caC_{\Lambda}}^2 = \sup_{0\neq \psi\in \caG_\Lambda^\perp \cap \cD_{\Lambda_1}^{\Lambda}}\frac{\|G_{\Lambda_2}\psi\|^2}{\|\psi\|^2}.
\ee
The subspace $\cD_{\Lambda_1}^{\Lambda}$ is of the form considered in Lemma~\ref{lem:orth_basis} and since the BVMD states $\psi_\Lambda(R)$ form a basis for the ground-state space $\cG_\Lambda$, this yields
\[
\caG_\Lambda^\perp\cap\cD_{\Lambda_1}^{\Lambda} = \spa\{\eta_\Lambda(R) : R\in\cR_\Lambda^{MM}\} .
\]
 We first bound $\|G_{\Lambda_2}\eta_\Lambda(R)\|^2$ for each $R\in\cR_{\Lambda}^{MM}$ and then generalize this to an arbitrary $\psi\in\caG_\Lambda^\perp\cap\cD_{\Lambda_1}^{\Lambda}$. Factoring $\eta_\Lambda(R)$ as in \eqref{eta}, the first step breaks into two cases determined by whether or not $\Gamma(n,i):=\Lambda\setminus \Lambda(n,i)$ is a subset of $\Lambda_2$.

If $n=2,3$, then $\Gamma(n,i)\subseteq \Lambda_2$ as $|\Gamma(n,i)|\leq 9$, and the frustration-free property of the ground-state space implies that $G_{\Lambda_2}=G_{\Lambda_2}G_{\Gamma(n,i)}$. Therefore, applying \eqref{eta},
\be\label{small_n}
G_{\Lambda_2}\eta_\Lambda(R) = G_{\Lambda_2}\left(\psi_{\Lambda(n,i)}(\tilde{R})\otimes G_{\Gamma(n,i)}\eta_n^{(i)}\right)=0 , 
\ee
where the last equality holds since $\eta_n^{(i)},\vp_n^{(i)}\in\caC_{\Gamma(n,i)}(M_n^{(i)})$ are orthogonal, and $\eta_n^{(i)}$ is orthogonal to all other BVMD states in $\caC_{\Gamma(n,i)}$ by the subspace orthogonality relations, see Theorem~\ref{thm:BVMD}.

\begin{figure}
	\begin{center}
		\includegraphics[scale=.125]{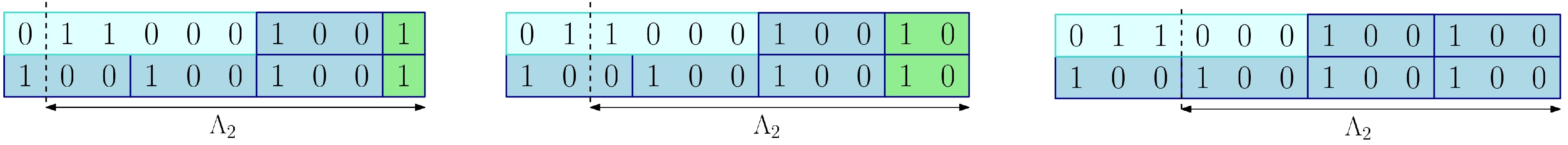}
	\end{center}
\caption{The root tilings $R_1^i$ and $R_2^i$ for $i=1,2,3$ obtained by truncating tilings $T\in\caC_{\Gamma(n,i)}(M_N^{(i)})$ to $\Lambda_2$.}\label{fig:MM_truncations}
\end{figure}

If $n\geq 4$, then applying Theorem~\ref{thm:BVMD}, $G_{\Lambda_2}$ can be expressed (up to a factor of the identity) as
\be
G_{\Lambda_2} = \sum_{R\in\cR_{\Lambda_2}}\frac{\ketbra{\psi_{\Lambda_2}(R)}}{\|\psi_{\Lambda_2}(R)\|^{2}},
\ee
and $G_{\Lambda_2}\eta_\Lambda(R) = \psi_{\Lambda(n,i)}(\tilde{R})\otimes G_{\Lambda_2}\eta_n^{(i)}$ as the support of the projection satisfies $\Lambda_2\subseteq \Gamma(n,i)$.
We only need to consider the states $\psi_{\Lambda_2}(R)$ that have a nonzero overlap with $\eta_n^{(i)}\in\caC_{\Gamma(n,i)}(M_n^{(i)})$. These are identified by the set of tilings produced from restricting $M_n^{(i)}\restriction_{\Lambda_2}$. For each case $i\in\{1,2,3\}$, there are precisely two roots $R_1^i,R_2^i\in\cR_{\Lambda_2}$ (sketched in Figure~\ref{fig:MM_truncations}) whose connected tilings $T\leftrightarrow R_k^i$ are obtained as such restrictions. Using \eqref{eta_n} and the recursion relations~\eqref{recursion1}--\eqref{recursion2} to rewrite $\eta_n^{(i)}$ and these two BVMD states, a direct calculation then yields
\begin{align}
G_{\Lambda_2}\eta_n^{(i)} & =\sum_{m=1,2}\frac{\ketbra{\psi_{\Lambda_2}(R_m^i)}}{\|\psi_{\Lambda_2}(R_m^i)\|^{2}}\eta_n^{(i)}\nonumber\\
& = \frac{\overline{\lambda}(1-\beta_{n-1}\|\vp_2\|^2)}{\|\vp_3\|^2}\vp_{n-3}\otimes\vp_{3}^{(i)} + \frac{|\lambda|^2(1-\beta_{n-1})}{\|\vp_2\|^2}\vp_{n-4}\otimes\ket{D}\otimes\vp_2^{(i)}. \label{G2_action}
\end{align}
The norm of $\eta_n^{(i)}$ can be expressed as $\|\eta_n^{(i)}\|^2 = \|\vp_{n-3}\|^2/(\beta_n\beta_{n-2})$ using \eqref{eta_n}. Combined with \eqref{G2_action} and substituting $\|\vp_3^{(i)}\|^2 = 1+2|\lambda|^2$ and $\|\vp_2^{(i)}\|=1+|\lambda|^2$, this produces the final estimate
\be\label{G_eta}
\|G_{\Lambda_2}\eta_\Lambda(R)\|^2 = f_n(|\lambda|^2)\|\eta_\Lambda(R)\|^2 . 
\ee

The mutual orthogonality of the BVMD spaces from Theorem~\ref{thm:BVMD} implies that $\{G_{\Lambda_2}\eta_\Lambda(R) : R\in\cR_\Lambda^{MM}\}$ is again a set of orthogonal states as each $\caC_{\Lambda}(R)$ is invariant under $G_{\Lambda_2}$. Thus, for an arbitrary $\psi \in \cD_{\Lambda_1}^\Lambda$, 
\[
\|G_{\Lambda_2}\psi\|^2 \leq \sup_{n\geq 4}f_{n}(|\lambda|^2)\|\psi\|^2 = f(|\lambda|^2)\|\psi\|^2,
\]
by \eqref{small_n} and \eqref{G_eta}. Hence, \eqref{epsilon_calc} follows from \eqref{rewrite}.
\end{proof}

We can now prove the lower bound on $E_1(\caC_\Lambda)$ from Theorem~\ref{thm:MM}.

\begin{proof}[Proof of Theorem~\ref{thm:MM}]
	If Properties 1-3 hold, then by the martingale method \cite[Theorem 5.1]{nachtergaele:2016b}, \[\gap(H_N^{\caC_{\Lambda}}):=\sup_{0\neq \psi\in\cG_\Lambda^\perp\cap\caC_\Lambda}\frac{\braket{\psi}{H_N\psi}}{\|\psi\|^2}\geq \kappa(1-\sqrt{3f(|\lambda|^2)})^2.\]
	The operator inequality \eqref{equiv_Hams} still holds when the Hamiltonians are restricted to the invariant subspace $\caC_\Lambda$. It implies  $E_1(\caC_{\Lambda})\geq \gap(H_N^{\caC_{\Lambda}})/3,$ and the claim follows. Thus, one needs only verify the three properties.
	
	1. By translation invariance, one trivially has for all $n\geq 2$,
	\[
	h_n^{\caC_{\Lambda}} \geq \min_{k\in\{7,8,9\}}E_1(\caC_{[1,k]})(\1-g_n)^{\caC_{\Lambda}}
	\]
	as $h_n = H_{\Lambda_n}$ with $|\Lambda_n|\in \{7,8,9\}$. The value $\kappa = \inf_{k\in\{7,8,9\}}E_1(\caC_{[1,k]})$ is obtained by calculating the gap in each BVMD space $\caC_{[1,k]}(R)$ and taking the minimum. This establishes Property 1.
	
	2. Since $\caC_\Lambda$ is invariant under each $g_n$ and $E_m$, the commutator can be expressed as
	\[
	[g_n^{\caC_{\Lambda}}, E_m^{\caC_{\Lambda}}] = P_{\caC_\Lambda}[g_n,E_m]P_{\caC_\Lambda}
	\]
	where $P_{\caC_\Lambda}$ is the orthogonal projection onto $\caC_{\Lambda}$. For $m<n-3$, the support of $E_m$ and $g_n$ is disjoint and the operators commute. For $m>n-1$, the support of $g_n=G_{\Lambda_n}$ is contained in the support of $G_m = G_{[1,3m+k]}$ and $G_{m+1} = G_{[1,3(m+1)+k]}$. Hence, $g_nE_m=E_mg_n =E_m$ for such $m$ by \eqref{E_n} and frustration-freeness. This establishes Property 2.
	
	3. The claim is trivial for $n=2$ as $g_2E_1=0$. For any $3\leq n \leq N$, similar to Property 2, the norm can be rewritten as
	\[
	\|g_{n}^{\caC_{\Lambda}}E_{n-1}^{\caC_{\Lambda}}\| = \|P_{\caC_{\Lambda}}g_{n}E_{n-1}P_{\caC_{\Lambda}}\|= \|g_{n}E_{n-1}\|_{\caC_{\Lambda}} .
	\]
	Notice that $\supp(g_{n}E_{n-1}) \subseteq \tilde{\Lambda}_n:=[1,3n+k]$ and, moreover, $\caC_{\tilde{\Lambda}_n}(R)$ is invariant under $g_{n}E_{n-1}$ for all $R\in\cR_{\tilde{\Lambda}_n}$. Thus, by Corollary~\ref{cor:isospectral},
	\[
	\|g_{n}E_{n-1}\|_{\caC_{\Lambda}} = \|g_{n}E_{n-1}\|_{\caC_{\tilde{\Lambda}_n}} = \|G_{\tilde{\Lambda}_n\setminus \tilde{\Lambda}_{n-3}}(\1-G_{\tilde{\Lambda}_n})G_{\tilde{\Lambda}_{n-1}}\|_{\caC_{\tilde{\Lambda}_n}}
	\]
	where we have inserted the definitions of $g_n$ and $E_{n-1}$, and applied the frustration-free property to factor $E_{n-1}$. The last expression is of the form treated in Lemma~\ref{lem:epsilon}, which establishes Property 3.
\end{proof}

\subsection{Finite size criterion}\label{sec:FSC}

It is now possible to produce a bound on $E_1(\caC_\Lambda^\per)$ that is both $|\Lambda|$ independent and is strictly positive in the limit $\lambda \to 0$. To do so, we employ the variant of Knabe's finite size criterion \cite{knabe:1988} found in \cite[Theorem 3.10]{NWY:2021} together with the open-boundary-conditions bound from Theorem~\ref{thm:MM}. The final bound will additionally depend on two constants
\be\label{spectral_bounds}
\kappa =\min_{k\in\{6,7,8\}}E_1(\caC_{[1,k]}),
\qquad \kappa(1+2|\lambda|^2) =\max_{k\in\{6,7,8\}}\|H_{[1,k]}\|_{\caC_{[1,k]}} 
\ee
which are calculated by determining the gap, resp.\ norm, of the Hamiltonian restricted to each BVMD space $\caC_{[1,k]}(R)$ and then taking the minimum, resp. maximum, over all $R$ and $k$.

\begin{theorem}\label{thm:FCS} Fix $n\geq 2$. Then for any ring $\Lambda = [1,L]$ with $L\geq 3n+9$,
	\be
	E_1(\caC_{\Lambda}^\per) \geq\frac{n}{2(1+2|\lambda|^2)(n-1)}\left[
	\min_{3\leq k \leq 5}E_1(\caC_{[1,3n+k]})-\frac{\kappa(1+2|\lambda|^2)}{n}\right].
	\ee
\end{theorem}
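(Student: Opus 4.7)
The plan is to apply the Knabe-type finite-size criterion of \cite[Theorem 3.10]{NWY:2021} after restricting everything to the invariant tiling subspace $\caC_\Lambda^\per$. Since $H_\Lambda^\per$ preserves $\caC_\Lambda^\per$ and $\caG_\Lambda^\per \subseteq \caC_\Lambda^\per$ (by Theorem~\ref{thm:gss}), $E_1(\caC_\Lambda^\per)$ is simply the spectral gap of the restricted operator, and the Knabe squaring argument goes through once appropriate local gap and local norm inputs have been formulated on the tiling space.

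First, I would set up a uniform cover of the ring $\Lambda$ by translated sub-intervals of length $3n+k$ for a suitable $k\in\{3,4,5\}$ selected according to $L\mod 3$ so that the cover fits evenly around the ring; the hypothesis $L\geq 3n+9$ guarantees such a choice exists. Each local term $h_x=n_xn_{x+2}+\kappa q_x^*q_x$ of $H_\Lambda^\per$ is supported on at most four consecutive sites and therefore lies inside a fixed number of these translated intervals, which produces a relation of the form $\sum_j H_{\Lambda_j}=c\,H_\Lambda^\per$ with $c$ depending linearly on $n$. Squaring this identity and expanding into diagonal and off-diagonal contributions provides the frame for the Knabe bound.

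Second, the diagonal terms are bounded below by $H_{\Lambda_j}^2\geq \gamma_{\mathrm{loc}}\,H_{\Lambda_j}$ on $(\caG_\Lambda^\per)^\perp\cap\caC_\Lambda^\per$, with $\gamma_{\mathrm{loc}}:=\min_{3\leq k\leq 5}E_1(\caC_{[1,3n+k]})$: by Corollary~\ref{cor:isospectral} the spectrum of each $H_{\Lambda_j}$ on $\caC_\Lambda^\per$ coincides with the spectrum of the corresponding local Hamiltonian on $\caC_{[1,3n+k]}$, so the local gap transfers faithfully. The off-diagonal cross terms, which vanish whenever the intervals are disjoint, are controlled by Cauchy--Schwarz together with the norm bound $\|H_{\Lambda_j}\|_{\caC_\Lambda^\per}\leq \kappa(1+2|\lambda|^2)$ from \eqref{spectral_bounds} (again via Corollary~\ref{cor:isospectral}). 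Assembling these inputs in the standard Knabe fashion—as detailed in \cite[Theorem 3.10]{NWY:2021}—yields the claimed lower bound on $E_1(\caC_\Lambda^\per)$ with the characteristic prefactor $n/(2(n-1)(1+2|\lambda|^2))$ and correction term $\kappa(1+2|\lambda|^2)/n$.

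The main obstacle is the careful bookkeeping in the cross-term estimate needed to extract precisely the prefactor $n/(2(n-1))$ and the correction of order $\kappa(1+2|\lambda|^2)/n$; this is essentially combinatorial but tedious, as it requires summing overlap contributions in a way that absorbs them into $H_\Lambda^\per$ itself. The essential conceptual point—and the reason for restricting to $\caC_\Lambda^\per$ rather than working on the full Hilbert space—is that the low-lying edge modes of the open-boundary operators $H_{\Lambda_j}$ responsible for the collapse of the bound in \cite{NWY:2021} as $\lambda\to 0$ are entirely absent from the tiling subspace, so $\gamma_{\mathrm{loc}}$ remains uniformly bounded away from zero in that limit and the overall estimate survives at small coupling.
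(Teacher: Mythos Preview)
Your outline misidentifies the scale at which the Knabe-type criterion is applied and, as a result, misapplies the norm bound~\eqref{spectral_bounds}. The paper does \emph{not} cover the ring directly by translates of length $3n+k$. It first covers $\Lambda$ by \emph{small} intervals $\Lambda_m$ of $6$--$8$ sites (shifted by $3$), spectrally flattens each $H_{\Lambda_m}\!\restriction_{\caC_\Lambda^\per}$ to a projection $P_m$, and only then applies \cite[Theorem~3.10]{NWY:2021} to the projection sum $H_N=\sum_m P_m$, with the finite-size blocks being $H_{n,k}=\sum_{m=k}^{n+k-1}P_m$. The constants in~\eqref{spectral_bounds} enter as the flattening bounds $\kappa P_m\le H_{\Lambda_m}\!\restriction_{\caC_\Lambda^\per}\le \kappa(1+2|\lambda|^2)P_m$; these make $H_N$ gap-equivalent to $H_\Lambda^\per\!\restriction_{\caC_\Lambda^\per}$ and $H_{n,k}$ gap-equivalent to $H_{\Lambda_{n,k}}\!\restriction_{\caC_\Lambda^\per}$ (with $|\Lambda_{n,k}|\in\{3n+3,\dots,3n+5\}$), and the stated prefactor and correction term come out of this conversion.

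Your version has two concrete gaps. First, \eqref{spectral_bounds} bounds $\|H_{[1,k]}\|_{\caC_{[1,k]}}$ only for $k\in\{6,7,8\}$; it says nothing about $\|H_{\Lambda_j}\|_{\caC_\Lambda^\per}$ when $|\Lambda_j|=3n+k$, so your off-diagonal estimate has no valid input (and the norm on the tiling space does grow with $n$, which would destroy the bound). Second, \cite[Theorem~3.10]{NWY:2021} is formulated for sums of \emph{projections} that commute except for nearest neighbours; unflattened Hamiltonians $H_{\Lambda_j}$ are not projections, and in your cover (length-$(3n{+}k)$ intervals shifted by $3$) each block overlaps with roughly $2n$ translates, not two. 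The squaring-plus-Cauchy--Schwarz outline therefore neither meets the hypotheses of the cited theorem nor reproduces the prefactor $n/(2(n-1))$. The missing ingredient is precisely the spectral flattening at the small scale, which simultaneously produces the nearest-neighbour commutation structure needed for Knabe and makes \eqref{spectral_bounds} the correct gap-to-norm conversion.
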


The proof proceeds in a series of general operator inequalities that imply gap bounds. Namely, the kernels of a pair of non-negative operators $A,B\in\cB(\cH)$ necessarily agree if there constants $c,C> 0$ such that $cB \leq A \leq CB$. If this kernel is nontrivial, then it is a ground-state space of both operators, and the inequality lifts to their respective spectral gaps, i.e.
\be\label{gap_equivalence}
c\gap(B) \leq \gap(A) \leq C\gap(B).
\ee
For simplicity, we will refer to such a pair as \emph{gap equivalent operators}. For a frustration-free Hamiltonian $H$, a simple way to construct a gap equivalent operator is to take $P=\1-G$ where $G$ is the orthogonal projection onto its ground state space, $\ker(H)$. This is sometimes referred to as the \emph{spectrally flattened Hamiltonian}, and it satisfies 
\be \label{spec_flattened}
\gap(H) P \leq H \leq \|H\|P.
\ee
Spectrally flattened Hamiltonians can be used to construct course-grained versions of a frustration-free quantum spin model which are amenable to finite size criterion. This is the approach taken here. 

\begin{proof}
We write $L=3N+r$ for some $r\in\{3,4,5\}$. Similar to the martingale method, we introduce a sequence of intervals $\Lambda_m$, $1\leq m \leq N+1$, that cover the ring $\Lambda$ in such a way that every interaction term ($n_xn_{x+2}$ or $q_x^*q_x$) is supported on at least one and at most two of the intervals; specifically,
\be
\Lambda_m = 
\begin{cases}
	[3m-2,3m+3], & 1\leq m \leq N, \\
	[L-r+1,L+3] & m=N+1,
\end{cases}
\ee
where we identify $x\equiv x+L$ in the ring geometry. Thus, the operator bounds
\be\label{Ham_bounds}
H_{\Lambda_{n,k}} \leq \sum_{m=k}^{n+k-1}H_{\Lambda_m} \leq 2 H_{\Lambda_{n,k}}, \quad H_{\Lambda}^\per\leq \sum_{m=1}^{N+1}H_{\Lambda_m} \leq 2 H_\Lambda^\per
\ee
hold for each $1\leq k \leq N+1$, where $\Lambda_{n,k}=\bigcup_{m=k}^{n+k-1}\Lambda_m$ and the addition $n+k-1$ is understood modulo $N+1$. These inequalities also hold when the operators are restricted to $\caC_\Lambda^\per$ as this is an invariant subspace of all of the Hamiltonians. Notice that the intervals $\Lambda_{n,k}$ satisfy $3n+3\leq |\Lambda_{n,k}|\leq 3n+5$ for all $k$. Since $|\Lambda|\geq 3n+9$, translation invariance and Corollary~\ref{cor:isospectral} imply the first gap bound
\be\label{gap:step1}
\min_{1\leq k \leq N+1}\gap(H_{\Lambda_{n,k}}\restriction_{\caC_\Lambda^\per}) \geq \min_{3\leq k\leq 5}E_1(\caC_{[1,3n+k]}).
\ee

We replace $H_{\Lambda_{n,k}}\restriction_{\caC_\Lambda^\per}$ and $H_\Lambda^\per\restriction_{\caC_\Lambda^\per}$ with gap equivalent Hamiltonians that are amenable to finite-size criteria. These are obtained from the spectrally flattened Hamiltonians $P_m :\caC_\Lambda^\per \to \caC_{\Lambda}^\per$ that are the orthogonal projections onto $\ran(H_{\Lambda_m}\restriction_{\caC_\Lambda^\per})$. Corollary~\ref{cor:isospectral} implies that $\spec(H_{\Lambda_m}\restriction_{\caC_\Lambda^\per})=\spec(H_{\Lambda_m}\restriction_{\caC_{\Lambda_m}})$. Therefore, by \eqref{spec_flattened}
\[
\kappa P_m \leq H_{\Lambda_m}\restriction_{\caC_\Lambda^\per} \leq \kappa(1+2|\lambda|^2) P_m
\] 
where we invoke \eqref{spectral_bounds} since $|\Lambda_m|\in\{6,7,8\}$. Summing the above over appropriate values of $m$ and applying the restricted form of \eqref{Ham_bounds} produces
\be\label{gap_equiv_hams}
\frac{\kappa}{2}H_{n,k} \leq H_{\Lambda_{n,k}}\restriction_{\caC_\Lambda^\per} \leq \kappa(1+2|\lambda|^2)H_{n,k}, \qquad 
\frac{\kappa}{2}H_{N} \leq H_{\Lambda}^\per\restriction_{\caC_\Lambda^\per} \leq \kappa(1+2|\lambda|^2)H_{N}
\ee
where $H_{n,k},H_N\in\cB(\caC_\Lambda^\per)$ are the gap equivalent Hamiltonians defined by
\[
H_{n,k}=\sum_{m=k}^{n+k-1}P_m, \qquad H_N = \sum_{m=1}^{N+1}P_m.
\]

The second set of operator inequalities in \eqref{gap_equiv_hams} guarantees that $H_N$ is a frustration-free Hamiltonian as the kernel is nontrivial and $P_m\geq 0$ for all $m$. In addition, any pair of distinct intervals $\Lambda_l$ and $\Lambda_m$ are disjoint unless $|l-m|=1$ or $\{l,m\}=\{1,N+1\}$. Since $P_m\psi = (\1-G_{\Lambda_m})\psi$, for all $\psi\in\caC_\Lambda^\per$, this implies that $[P_m,P_l]=0$ under the same constraints. Thus, the Hamiltonians $H_{n,k},H_N\in\cB(\caC_\Lambda^\per)$ satisfy the conditions of \cite[Theorem 3.10]{NWY:2021}, and so the respective spectral gaps satisfy
\be\label{gap:step2}
\gap(H_N) \geq \frac{n-1}{n}\left(\min_{1\leq k \leq N+1}\gap(H_{n,k})-\frac{1}{n}\right).
\ee
Equations \eqref{gap_equivalence} and \eqref{gap_equiv_hams} imply that
\be\label{gap:step3}
E_1(\caC_\Lambda^\per) = \gap(H_{\Lambda}^\per\restriction_{\caC_\Lambda^\per} ) \geq \frac{\kappa}{2}\gap(H_N), \qquad \gap(H_{n,k}) \geq \frac{1}{2\kappa(1+2|\lambda|^2)}\gap( H_{\Lambda_{n,k}}\restriction_{\caC_\Lambda^\per} ) .
\ee
Hence, the result is an immediate consequence of combining the bounds in \eqref{gap:step1} and \eqref{gap:step2}-\eqref{gap:step3}.
\end{proof}

\subsection{Electrostatic estimates}\label{sec:electrostatics}

The desired lower bound on $E_0((\caC_\Lambda^\per)^\perp)$ is the focus of this section. 
The subspace $(\caC_\Lambda^\per)^\perp$ is spanned by the set of configuration states labeled by  the set
\[\cS_\Lambda := \{0,1\}^{|\Lambda|}\setminus\ran(\sigma_\Lambda\restriction_{\cT_\Lambda^\per}).\] 
Our main objective is to identify a constant $\gamma=\gamma(\kappa,\lambda)>0$, which is  independent of $\Lambda$ and strictly positive in the limit $\lambda\to0$, so that the expected energy for any state  $\psi=\sum_{\mu\in\cS_\Lambda}\psi(\mu)\ket{\mu}$ satisfies
\be\label{gs_energy}
\braket{\psi}{H_\Lambda^\per \psi} = \sum_{\mu\in\cS_\Lambda^{(1)} }e_\Lambda(\mu)|\psi(\mu)|^2 + \kappa \sum_{\nu\in\{0,1\}^{|\Lambda|}}\sum_{x\in\Lambda}|\braket{\nu}{q_x\psi}|^2 \geq \gamma\|\psi\|^2,
\ee
where $e_\Lambda(\mu)=\sum_{x\in\Lambda}\mu_x\mu_{x+2}$ is the electrostatic energy associated with $\mu$, and
\[
\cS_\Lambda^{(1)} := \left\{\mu\in\{0,1\}^{|\Lambda|} : e_\Lambda(\mu)\geq 1\right\}\subseteq \cS_\Lambda.
\]

As is evident from \eqref{gs_energy}, the main challenge comes from the configurations $\mu\in\cS_\Lambda\setminus \cS_\Lambda^{(1)}$. Lemma~\ref{lem:configs} implies these can be partitioned into the following two sets (which are understood with the convention $x\equiv x+|\Lambda|$):
\begin{align*}
	\cS_\Lambda^{(2)} & = \left\{\mu\in\{0,1\}^{|\Lambda|} : \mu_x=\mu_{x+1}=1, \, \mu_{x-3}+\mu_{x+4}\geq 1 \text{ for some }x\in\Lambda\right\} \setminus \cS_\Lambda^{(1)}\\
	\cS_\Lambda^{(3)} & = \left\{\mu\in\{0,1\}^{|\Lambda|} : \mu_x=\mu_{x+1}=\mu_{x-4}=\mu_{x-5}=1 \text{ for some }x\in\Lambda\right\} \setminus (\cS_\Lambda^{(1)}\cup\cS_\Lambda^{(2)}).
\end{align*}

 Recalling the notation from \eqref{dipole_energy}, the strategy for the next result is to use this classification to pick a configuration $\nu\in\{0,1\}^{|\Lambda|}$ and site $x\in\Lambda$ for each $\mu\in\cS_\Lambda^{(2)}\cup \cS_\Lambda^{(3)}$ so that $\mu \in \{\alpha_{x+1}^*\alpha_{x+2}^*\nu, \, \alpha_{x}^*\alpha_{x+3}^*\nu\}$, and then estimate 
 \[
 |\braket{\nu}{q_x \psi}|^2 = |\psi(\alpha_{x+1}^*\alpha_{x+2}^*\nu)-\lambda\psi(\alpha_{x}^*\alpha_{x+3}^*\nu)|^2
 \]
 from below by a linear combination of $|\psi(\alpha_{x+1}^*\alpha_{x+2}^*\nu)|^2$ and $|\psi(\alpha_{x}^*\alpha_{x+3}^*\nu)|^2.$

\begin{theorem}\label{thm:electrostatic}
	Suppose that $\Lambda = [1,L]$ with $L \geq 11$ and $|\lambda|>0$. Then, $E_0(\caC_\Lambda^\per)^\perp) \geq \gamma^\per$, where as defined in~\eqref{def:gammaper},
	\[
	\gamma^\per=\frac{1}{3}\min\left\{1,\frac{\kappa}{2+2\kappa|\lambda|^2},\frac{\kappa}{1+\kappa}\right\}.
	\]
\end{theorem}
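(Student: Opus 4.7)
The plan is to decompose $\psi = \psi_1 + \psi_2 + \psi_3$ along the partition $\cS_\Lambda=\cS_\Lambda^{(1)}\uplus\cS_\Lambda^{(2)}\uplus\cS_\Lambda^{(3)}$ and to separately establish the three lower bounds
\[
\braket{\psi}{H_\Lambda^\per\psi} \geq \|\psi_1\|^2,\qquad
\braket{\psi}{H_\Lambda^\per\psi} \geq \frac{\kappa}{2(1+\kappa|\lambda|^2)}\|\psi_2\|^2,\qquad
\braket{\psi}{H_\Lambda^\per\psi} \geq \frac{\kappa}{1+\kappa}\|\psi_3\|^2.
\]
Taking the convex combination of these three inequalities with equal weights $1/3$ and using the orthogonal sum $\|\psi\|^2 = \|\psi_1\|^2+\|\psi_2\|^2+\|\psi_3\|^2$ then immediately yields $\braket{\psi}{H_\Lambda^\per\psi}\geq\gamma^\per\|\psi\|^2$. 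The first inequality is immediate from the electrostatic part of \eqref{gs_energy}, since $e_\Lambda(\mu)\geq 1$ for every $\mu\in\cS_\Lambda^{(1)}$.

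For the remaining two inequalities I repurpose the dipole analysis from the proof of Theorem~\ref{thm:gss}. For each $\mu\in\cS_\Lambda^{(k)}$ with $k\in\{2,3\}$, Lemma~\ref{lem:configs} guarantees an adjacent pair $\mu_x=\mu_{x+1}=1$ together with a nearby occupied site, from which I build a canonical dipole witness — concretely the choice $y=x-1$ and $\nu=\alpha_x\alpha_{x+1}\mu$ — giving $\braket{\nu}{q_y\psi}=\psi(\mu)-\lambda\psi(\eta_\mu)$ with $\eta_\mu\in\cS_\Lambda^{(k-1)}$. By construction the resulting map $\mu\mapsto(y,\nu)$ is injective on $\cS_\Lambda^{(k)}$, since $\mu$ is uniquely recovered as $\alpha_{y+1}^*\alpha_{y+2}^*\nu$. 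A weighted Cauchy–Schwarz inequality $|a-\lambda b|^2 \geq (1-t)|a|^2 - \tfrac{1-t}{t}|\lambda|^2|b|^2$, $t\in(0,1)$, then converts each such dipole contribution into $\kappa(1-t)|\psi(\mu)|^2$ minus a penalty proportional to $|\psi(\eta_\mu)|^2$. For $k=2$ the penalty is absorbed by the electrostatic surplus available on $\cS_\Lambda^{(1)}$; for $k=3$, where no direct electrostatic help exists, it is absorbed into the analogous dipole-energy bound just derived for $\psi_2$, supplemented by the diagonal $\kappa$-contribution from the second adjacent pair of $\mu$. Optimizing $t$ in each case delivers the stated coefficients.

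The main obstacle is keeping track of multiplicities and matching the exact numerical constants in $\gamma^\per$. Although the injective witness choice ensures that no $q_y^*q_y$ term is used twice within either the $k=2$ or the $k=3$ estimate taken separately, a given kinetic term may contribute to both, so a careful splitting of the total dipole energy into portions devoted to $\psi_2$ versus $\psi_3$ is required; moreover, a single $\eta\in\cS_\Lambda^{(k-1)}$ may serve as the penalty target for several $\mu\in\cS_\Lambda^{(k)}$, a purely combinatorial multiplicity governed by the tiling classification in Lemma~\ref{lem:configs}. The specific constants $2(1+\kappa|\lambda|^2)$ and $(1+\kappa)$ ultimately emerge as the Schur complements of the effective $2\times 2$ blocks of $H_\Lambda^\per$ on $\spa\{\ket{\mu},\ket{\eta_\mu}\}$, which collect all kinetic and electrostatic diagonal entries touching the two configurations together with the off-diagonal $-\kappa\lambda$ from the chosen dipole move.
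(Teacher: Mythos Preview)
Your overall plan --- partition $\cS_\Lambda$ into $\cS_\Lambda^{(1)}\uplus\cS_\Lambda^{(2)}\uplus\cS_\Lambda^{(3)}$, bound each $\|\psi_i\|^2$ separately against the full energy, and average with weights $1/3$ --- is exactly the paper's strategy, and your treatment of Cases~1 and~2 matches it closely, including the multiplicity count that turns $\kappa/(1+\kappa|\lambda|^2)$ into $\kappa/(2+2\kappa|\lambda|^2)$.

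The gap is in Case~3. The paper does \emph{not} feed the penalty $|\psi(\eta_\mu)|^2$ back into the already-established $\psi_2$-bound as you propose; instead it uses \emph{two} kinetic terms in a chain. The first dipole move sends $\mu\in\cS_\Lambda^{(3)}$ to an intermediate $\eta'\in\cS_\Lambda^{(2)}$ (your $\eta_\mu$), and then a \emph{second} dipole --- applied to $\eta'$, not to $\mu$ --- sends $\eta'$ to some $\eta\in\cS_\Lambda^{(1)}$. Both contributions are combined as
\[
T_\psi(\mu)=|\psi(\mu)-\lambda\psi(\eta')|^2+|\psi(\eta)-\lambda\psi(\eta')|^2,
\]
and the two Cauchy--Schwarz parameters are then tuned so that the $|\psi(\eta')|^2$-coefficient \emph{vanishes identically}, leaving only a $|\psi(\eta)|^2$-penalty absorbable by electrostatics. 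This cancellation is precisely what produces a $\lambda$-independent constant of the form $\kappa/(1+\kappa)$. By contrast, your route --- one move plus absorption via $\|\psi_2\|^2\leq \tfrac{2(1+\kappa|\lambda|^2)}{\kappa}\braket{\psi}{H_\Lambda^\per\psi}$ --- yields a constant that depends on $|\lambda|$ (though still positive as $\lambda\to 0$), so it does not deliver the specific $\gamma^\per$ in the statement. The ``diagonal $\kappa$-contribution from the second adjacent pair of $\mu$'' cannot fill this gap: the diagonal of $q_{y'}^*q_{y'}$ on $\ket\mu$ is not separable from its off-diagonal, which couples $\mu$ to yet another configuration distinct from $\eta_\mu$. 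Likewise the Schur-complement heuristic is only suggestive, since $H_\Lambda^\per$ does not leave $\spa\{\ket{\mu},\ket{\eta_\mu}\}$ invariant.
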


\begin{proof}
We fix $\psi\in(\caC_\Lambda^\per)^\perp$and consider separately all $\mu\in\cS_\Lambda$ in the support of $ \psi $ to establish \eqref{gs_energy}. 
\vskip 4pt

\emph{ Case $\mu\in\cS_\Lambda^{(1)}$:}  One trivially has the lower bound
\be\label{bound1}
e_\Lambda(\mu)|\psi(\mu)|^2  \geq |\psi(\mu)|^2 =: \gamma^{(1)}|\psi(\mu)|^2.
\ee
For future purpose, we additionally set $ T_\psi(\mu) := 0 $. 

\emph{ Case $\mu\in\cS_\Lambda^{(2)}$:} Set $x\equiv x_\mu = \max\{y\in[1,L] : \mu_{y+1}=\mu_{y+2}=1 \, \wedge \, \mu_{y-2}+\mu_{y+5}\geq 1\}$. Except those sites indicated by this set, all other sites between $[x-2,x+5]$ are unoccupied since $\mu\notin \cS_\Lambda^{(1)}$. Thus, both $\nu = \alpha_{x+1}\alpha_{x+2}\mu$ and $\eta(\mu)\equiv\eta = \alpha_{x}^*\alpha_{x+3}^*\nu$ are nonempty configurations. By the Cauchy-Schwarz inequality, the lower bound
\[T_\psi(\mu) := |\braket{\nu}{q_x\psi}|^2 = |\psi(\mu)-\lambda\psi(\eta)|^2 
\geq (1-\delta)|\psi(\mu)|^2 - |\lambda|^2\frac{1-\delta}{\delta}|\psi(\eta)|^2
\]
holds for all $\delta\in(0,1)$. Choosing $\delta = \frac{\kappa|\lambda|^2}{1+\kappa|\lambda|^2}$, and noting that $e_\Lambda(\eta)\geq 1$ produces the final estimate 
\be\label{bound2}
e_\Lambda(\eta)|\psi(\eta)|^2 + \kappa T_\psi(\mu) \geq \frac{\kappa}{1+\kappa|\lambda|^2}|\psi(\mu)|^2 =: \gamma^{(2)}|\psi(\mu)|^2.
\ee

\emph{ Case $\mu\in\cS_\Lambda^{(3)}$:} Set $x\equiv x_\mu = \max\{y\in[1,L] : \mu_{y+1}=\mu_{y+2}=\mu_{y-3}=\mu_{y-4} = 1\}$. Again, since $\mu\notin \cS_\Lambda^{(1)}\cup\cS_\Lambda^{(2)}$ all other (non-required) sites between $[x-4,x+3]$ are unoccupied. Thus, the following four configurations are nonempty: 
\[
\nu = \alpha_{x+1}\alpha_{x+2}\mu, 
\quad \eta' = \alpha_{x}^*\alpha_{x+3}^*\nu, 
\quad \nu'=\alpha_{x-3}\alpha_{x}\eta', \quad
\eta(\mu)\equiv\eta=\alpha_{x-2}^*\alpha_{x-1}^*\nu',
\] 
and for all $\delta,\delta'\in(0,1)$ the Cauchy-Schwarz inequality yields
\begin{align*}
T_\psi(\mu) :=& |\braket{\nu}{q_x\psi}|^2 + |\braket{\nu'}{q_{x-3}\psi}|^2 
= |\psi(\mu)-\lambda\psi(\eta')|^2 +  |\psi(\eta)-\lambda\psi(\eta')|^2 \\
\geq& (1-\delta)|\psi(\mu)|^2 + |\lambda|^2\left(1-\delta'-\frac{1-\delta}{\delta}\right)|\psi(\eta')|^2 - \frac{1-\delta'}{\delta'}|\psi(\eta)|^2.
\end{align*}
By considering its occupation on $[x-4,x+3]$, it is easy to check that $e_\Lambda(\eta)=1$. Hence, choosing $\delta' = \frac{\kappa}{\kappa+1}$ and $\delta = \frac{1}{2-\delta'}$ gives
\be\label{bound3}
e_\Lambda(\eta)|\psi(\eta)|^2 + \kappa T_\psi(\mu) \geq \frac{\kappa}{1+\kappa}|\psi(\mu)|^2 =: \gamma^{(3)}|\psi(\mu)|^2.
\ee

In each of the three cases $i\in\{1,2,3\}$ the pairs $(x,\nu)$ that contribute to $T_\psi(\mu)$ for any $\mu\in\cS_\Lambda^{(i)}$ are unique. Thus, for fixed $i$, 
\begin{align}
	\gamma^{(i)}\sum_{\mu\in\cS_\Lambda^{(i)}}|\psi(\mu)|^2 & \leq \sum_{\mu\in\cS_\Lambda^{(i)}}\left(e_\Lambda(\eta(\mu))|\psi(\eta(\mu))|^2+\kappa T_\psi(\mu)\right) \nonumber\\
	& \leq c_i\sum_{\eta\in\cS_\Lambda^{(1)}}e_\Lambda(\eta)|\psi(\eta)|^2+\kappa\sum_{\nu\in\{0,1\}^{|\Lambda|}}\sum_{x\in\Lambda}|\braket{\eta}{q_x\psi}|^2 \leq c_i\braket{\psi}{H_\Lambda \psi}
\end{align}
where $c_i := \max_{\eta\in\cS_{\Lambda}^{(1)}}|\{\mu\in\cS_\Lambda^{(i)}: \eta(\mu)=\eta\}|\geq 1$. The claimed bound is then a consequence of dividing by $c_i$ and summing over $i$. Thus, the result follows from determining $c_i$ for $i=1,2,3$.

It is trivial that $c_1 = 1$. The values $c_2=2$ and $c_3=1$ can be determined for noting that there is an interval $\Lambda_x'\subseteq \Lambda$ of at most 8 sites near $x=x_\mu$ that contains all sites of $\eta=\eta(\mu)$ that contribute to the electrostatic energy. The constraint $|\Lambda|\geq 11$ guarantees this interval can be uniquely identified in the ring geometry. In the case of $c_3$, $e_\Lambda(\eta)= \eta_{x-4}\eta_{x-2}$, from which it is possible to identify $x$ and map back to the unique $\mu$. In the case of $c_2$, depending on the value of $\eta_{x-2}+\eta_{x+5}\geq 1$, we have
$ e_\Lambda(\eta) = \eta_{x-2}\eta_{x} + \eta_{x+3}\eta_{x+5}\in \{1,2\} $.
When $e_\Lambda(\eta)=1$, it is not always possible to determine if the electrostatic energy comes from the interval $[x-2,x]$ or $[x+3,x+5]$, which accounts for the value $c_2=2$. 
\end{proof}

A similar approach can be used produce a lower bound
\[
E_0(\caC_\Lambda^\perp) :=\sup_{0\neq \psi \in\caC_\Lambda^\perp}\frac{\braket{\psi}{H_\Lambda\psi}}{\|\psi\|^2} \geq \gamma^\obc
\]
with $0 < \gamma^\obc = \mathcal{O}(|\lambda|^2)$ reflecting the presence of the edge modes discussed in the introduction and \cite{NWY:2021}. The difference in the bound is due to configurations $\mu\in\cS_\Lambda^{(2)}$ where the pair of nearest-neighbor occupied sites are along the boundary of $\Lambda$. 

\subsection{Proof of Theorem~\ref{thm:bulk_gap}}
We now provide the final details in the proof of the main result.
\begin{proof}[Proof of Theorem~\ref{thm:bulk_gap}]
Fix $|\Lambda|\geq 18$ and let $n=n(|\Lambda|)\geq 3$ be the largest integer so that $|\Lambda|\geq 3n+9$. The subspace $\caC_\Lambda^\per$ is invariant under $H_\Lambda^{\per}$. Thus, the Hamiltonian is block diagonal with respect to the Hilbert space decomposition $\cH_\Lambda = \caC_\Lambda^\per \oplus (\caC_\Lambda^\per)^\perp$. Since $\caG_\Lambda^\per \subseteq \caC_\Lambda^\per$, Theorem~\ref{thm:FCS} and Theorem~\ref{thm:electrostatic} imply
\begin{align*}
\gap(H_\Lambda^\per) &=\min\{E_0((\caC_\Lambda^\per)^\perp),\; E_1(\caC_\Lambda^\per)\} \\
& \geq \min \left\{\gamma^\per, \;\frac{1}{2(1+2|\lambda|^2)}\left[
\min_{3\leq k \leq 5}E_1(\caC_{[1,3n+k]})-\frac{\kappa(1+2|\lambda|^2)}{n}\right]\right\}\\
&\geq
\min \left\{\gamma^\per, \;\frac{\kappa}{6(1+2|\lambda|^2)}\left(1-\sqrt{3f(|\lambda|^2)}\right)^2-\frac{\kappa}{2n}\right\}
\end{align*}
where in the last inequality we have used Theorem~\ref{thm:MM} to bound $E_1(\caC_{[1,3n+k]})$. The result immediately follows.
\end{proof}
\minisec{Acknowledgements }

{\small This work was supported by the DFG under EXC-2111--390814868.}

\bibliographystyle{abbrv}  
\bibliography{bosonic} 

\bigskip
\bigskip
\bigskip
\bigskip

\noindent Simone Warzel\\
Munich Center for Quantum Science and Technology, and\\
Zentrum Mathematik  \& Department of Physics, TU M\"{u}nchen\\
85747 Garching, Germany\\
\verb+warzel@ma.tum.de+\\

\noindent Amanda Young\\
Munich Center for Quantum Science and Technology, and\\
Zentrum Mathematik, TU M\"{u}nchen\\
85747 Garching, Germany\\
\verb+young@ma.tum.de+\\

\end{document}